\documentclass[a4paper,12pt]{article}
\usepackage[margin=1.1in]{geometry}
\usepackage[backend=biber,doi=false,url=false,style=apa,sortcites=false]{biblatex}
\usepackage[titletoc, title]{appendix}
\usepackage[colorlinks,citecolor=blue,urlcolor=blue,linkcolor=blue]{hyperref}
\usepackage{tikz}
\usepackage{amssymb} 
\usepackage{kbordermatrix} 
\usepackage{amsmath} 
\usepackage{mathtools} 
\usepackage{amsthm} 
\usepackage{gensymb} 
\usepackage[skip=10pt,font=small,labelfont=bf]{caption} 
\usepackage{subcaption} 
\usepackage{enumitem} 
\usepackage{pdfpages}

\DeclareMathOperator*{\argmax}{arg\,max}

\DeclareMathOperator{\conv}{conv}
\DeclareMathOperator{\dis}{dis}

\theoremstyle{definition}
\newtheorem{axiom}{Axiom}
\newtheorem{proposition}{Proposition}

\newtheorem{lemma}{Lemma}
\newtheorem{theorem}{Theorem}
\newtheorem{assumption}{Assumption}

\newtheorem{axiomalt}{Axiom}[axiom]
\newenvironment{axiomp}[1]{
  \renewcommand\theaxiomalt{#1}
  \axiomalt
}{\endaxiomalt}

\newtheorem{assumptionalt}{Assumption}[assumption]
\newenvironment{assumptionp}[1]{
  \renewcommand\theassumptionalt{#1}
  \assumptionalt
}{\endassumptionalt}

\newcommand{\myx}{4}

\renewcommand{\kbldelim}{(}
\renewcommand{\kbrdelim}{)}

\makeatletter
\def\@fnsymbol#1{\ensuremath{\ifcase#1\or *\or\ddagger\or
   \mathsection\or \mathparagraph\or \|\or **\or \dagger \else\@ctrerr\fi}}
\makeatother

\title{Rational Bargaining:\\ Characterization and Implementation\thanks{I gratefully acknowledge financial support from the \textit{Jubiläumsfonds} of the Austrian National Bank (OeNB), project number 18719.}}
\author{Philipp Peitler\thanks{University of Vienna; philipp.peitler@univie.ac.at}}
\date{April 30, 2024 \\ \href{https://philipppeitler.github.io/mywebsite/bargaining.pdf}{[Latest version here]}}
\begin{document}
\maketitle

\begin{abstract}
The von Neumann-Morgenstern axioms are uncontroversial desiderata for individual decision-making. We say that a bargaining solution is rational if it can be interpreted as the most preferred alternatives under these axioms. Yet, neither the Nash nor the Kalai-Smorodinsky bargaining solution is rational in this sense. We formalize two consequences of rationality, namely that one can neither be strictly better off nor strictly worse off from randomizing over different actions. These two axioms, together with other standard axioms, characterize the relative utilitarian bargaining solution. We then implement this bargaining solution in sub-game perfect equilibrium.
\end{abstract}

\noindent {\bf Keywords:} bargaining, axioms, relative utilitarianism, rationality, implementation.
\newpage
\section{Introduction}

Bargaining is prevalent in many economic settings, for instance, when both sides of a market are concentrated. Typical examples are bargaining over prices between up- and downstream firms, wage bargaining between firms and unions, and bargaining between health care providers and insurers. A common assumption in the applied economic literature is that the outcome of the bargaining process is given by the Nash bargaining solution \autocite{nash1950bargaining} or a generalization thereof \autocite{kalai1977nonsymmetric}.\footnote{Recent applications of the Nash solution for bargaining between up- and downstream firms can be found in \textcite{crawford2012welfare}, \textcite{shang2016information}, \textcite{crawford2018welfare}, \textcite{rogerson2020modelling} and \textcite{grunewald2023auto}, for wage bargaining in \textcite{cahuc2006wage}, \textcite{dobbelaere2018labor}, \textcite{depinto2019impact}, \textcite{piluso2023mobility} and \textcite{terai2023investment}, and for bargaining between insurers and health care providers in \textcite{gaynor2015industrial}, \textcite{gowrisankaran2015mergers}, \textcite{ho2017insurer}, \textcite{dafny2019price} and \textcite{ho2019equilibrium}.} While multiple bargaining protocols have been identified that support the Nash solution non-cooperatively,\footnote{See Section \ref{ch2:secimplementation} for implementations of the Nash bargaining solution.} the elegance of Nash's axiomatization undoubtedly plays a part in the popularity of the solution. 

Another setting in which bargaining is pervasive is the settlement of disputes outside of courts. Oftentimes, in order to facilitate agreement and prevent a costly legal battle, arbitrators or mediators are appointed to make a decision. The international division of the American Arbitration Association (AAA) alone handled over ten thousand cases in the year 2022, with close to 16 billion dollars in total claims.\footnote{See the 2022 AAA-ICDR B2B Case Statistics at \href{https://www.adr.org/research}{https://www.adr.org/research}. Accessed on November 5, 2023.} In the context of arbitration, Nash's axioms could be understood as desirable characteristics we would want an arbitrator to display.

Inherent to any economic decision is uncertainty about the final outcome. When up- and downstream firms negotiate prices, they face uncertainty about consumer demand. When firms and unions negotiate wages, they face uncertainty about future inflation rates. Both as a normative desideratum and as a benchmark, economic actors are typically assumed to deal with uncertainty rationally, meaning they act in accordance with the \textcite{vonneumann1944theory} axioms and maximize expected utility. Similarly, we feel that the collective decision of a group, such as the resolution of a bargaining problem, should be rational in this sense. If an arbitrator or mediator decides for the group, they should do so rationally as well. 

We define a bargaining solution as rational if it can be interpreted as the most preferred alternatives under some von Neumann-Morgenstern (vNM) preference relation. According to the vNM paradigm, a lottery is evaluated by first considering the value of each of the final outcomes and then taking a convex combination of these values. By reducing a lottery to the final outcomes, it is implicitly ruled out that the decision-maker values the process of randomization. As a consequence, a decision maker would only randomize over different outcomes if she were indifferent between them.\footnote{Consider for instance mixing in Nash equilibrium, which requires that players are indifferent between all actions over which they randomize.} The Nash bargaining solution violates this condition. To demonstrate this, consider an arbitrator who has to allocate a single indivisible item to either one of two agents. Let $a_i$ denote the allocation where Agent $i$ receives the item and let the arbitrator's preferences be captured by the utility function $u$. The Nash solution would prescribe that the arbitrator flips a coin and allocates the item to the winner, i.e., a 50-50 lottery between $a_1$ and $a_2$. The Nash solution selects neither $a_1$ nor $a_2$, over which the coin randomizes. Hence, it demands that the arbitrator strictly prefers the coin flip over the deterministic allocations. However, by the vNM axioms, the utility of the coin flip is given by $\frac{1}{2}u(a_1) + \frac{1}{2}u(a_2)$, which cannot exceed both $u(a_1)$ and $u(a_2)$. Therefore, the Nash solution is not rational. Note that the same is true for the bargaining solution by \textcite{kalai1975other}.

The incompatibility of rational risk preferences and ex-ante symmetry (i.e., choosing the coin-flip) is well known in the literature on preference aggregation \autocite{diamond1967cardinal, harsanyi1975nonlinear, broome1984uncertainty} and non-expected utility theory \autocite{machina1989dynamic}.\footnote{\textit{Ex-ante} refers to the fact that the coin-flip is only symmetric before the coin has landed.} Some authors claim that since we have to value ex-ante symmetry as a fairness principle, we need to give up rationality. We argue against this intuition later in this section. We believe that fairness can be understood as the arbitrator giving equal consideration to every agent, which doesn't necessitate that agents are made equal in their expected outcomes. It seems that the AAA shares a similar notion of fairness, as they write that the arbitrator should ``achieve a fair, efficient, and economical resolution of the dispute'', but also makes clear that they ``do not split the baby'' as they decide ``clearly in favor of one party in over 94.5\% of the cases''.\footnote{See the Commercial Arbitration Rules and Mediation Procedures at \href{https://www.adr.org/Rules}{https://www.adr.org/Rules}, pages 22, 23 and 26, and \href{https://go.adr.org/split-the-baby.html}{https://go.adr.org/split-the-baby.html}. Accessed on November 5, 2023.}

In this paper, we aim to find a rational bargaining solution. To capture a central aspect of rationality, we propose the \textit{no benefit of randomization} (\ref{ch2:NBR}) axiom. This axiom relates non-convex bargaining sets to their convex hull. A non-convex bargaining set represents a situation where randomization is either not feasible or not permitted. Allowing for randomization means the group or arbitrator can choose from the convex hull of this set. The axiom then says that if an alternative is selected by the bargaining solution in the non-convex set, it must still be selected in the convex hull of the set. Another consequence of the vNM axioms, and the flip-side of \ref{ch2:NBR}, is that the agent is never strictly worse off when randomizing. Hence, when randomization is possible (i.e., the bargaining set is convex) and two alternatives are selected by the bargaining solution, then any mixture (i.e., convex combination) of these two alternatives must be selected as well. We call this axiom \textit{convexity} (\ref{ch2:CS}). We find that \ref{ch2:NBR} and \ref{ch2:CS}, together with standard axioms, characterize the \textit{relative utilitarian} (RU) bargaining solution. The RU bargaining solution selects the alternatives with the highest sum of normalized utilities. Utilities are normalized such that the disagreement point has utility 0, and the best alternative  has utility 1. The other axioms that underlie our characterization are invariance to the utility-scale, strong Pareto, weak symmetry, and a weaker version of Nash's independence of irrelevant alternatives (IIA) axiom. The proof for two agents is simple and resembles the one of \textcite{nash1950bargaining}. The proof easily generalizes to any number of agents.

Besides a characterization, we also implement the RU bargaining solution in sub-game perfect equilibrium. This means that we identify a bargaining protocol, which, in equilibrium, leads to a RU-optimal alternative. We show that full implementation is not possible and identify a game form that weakly implements our bargaining solution. However, we are quite close to full implementation, as any RU-optimal alternative that is strictly better than the disagreement point for every agent is an equilibrium outcome. In this game, agents simultaneously make a proposal consisting of an alternative and a probability for each agent. In equilibrium, all agents propose the same RU-optimal alternative, and this alternative is implemented immediately. If there is disagreement among the agents, the proposal with the highest sum of probabilities has to be sequentially approved by all agents. Agents can choose whether to accept the alternative or receive a utility equal to the probability that the proposal assigns to them. Hence, the probabilities can be understood as a claim regarding how good the alternative is for each agent. If the claim was inflated, the alternative is rejected by at least one agent. If an agent rejects, all other agents receive a utility equal to the disagreement point.

Finally, we consider two other rational solutions. First, we provide the first characterization of the asymmetric relative utilitarian (ARU) solution. Similar to the asymmetric Nash solution by \textcite{kalai1977nonsymmetric}, the ARU solution generalizes the RU solution by allowing for different weights on the individuals' utilities. This can capture differences in bargaining power, which is important for applications. Second, we characterize the utilitarian solution. This solution maximizes the sum of individual utilities without normalizing them first. It is applicable when utilities have absolute meaning, for instance, in the case where utilities express individuals' willingness to pay to bring about a social alternative. The axiomatization requires only a small tweak to the axioms that characterize RU solution. We simply drop the invariance axiom and impose Nash's IIA in its full strength.

We now come back to the incompatibility of rationality and ex-ante symmetry. Arguments in favor of rational risk preferences are well known, so we do not recapitulate them here. Instead, we will argue that the desirability of the coin flip is, at least in part, due to reasons one is supposed to abstract from. First, in a bargaining situation, individuals might not only receive utility from the final outcome but might also care about the procedure by which this outcome is implemented. An individual might be better off losing a public coin flip compared to the item being allocated to the other agent directly because they care about being treated symmetrically. Furthermore, an individual might be better off winning a public coin flip, compared to the item being allocated to them directly, because they feel better about receiving the item when the other agent had a chance as well. If this is the case, then the procedure of flipping the coin publicly is different from simply mixing over the alternatives, for instance, by flipping the coin in secret. In fact, the former Pareto dominates the latter. A rational arbitrator cannot strictly prefer the secret coin flip to either of the direct allocations, but she could strictly prefer the procedure of flipping the coin publicly if individuals indeed care about procedures. In order to abstract from this possibility, one should think of randomization as being performed in secret. Second, flipping a coin seems desirable because it is the obvious tie-breaking rule when the arbitrator is indifferent between giving the item to either of the agents. However, similar to choice correspondences in the theory of individual decision-making, a set-valued bargaining rule simply does not make a statement about how ties are broken in case of multiplicity. It is perfectly compatible with rationality to impose such a tie-breaking rule as a second-order principle, but only after first-order principles have pinned down the solution. Third, the coin flip would prevent a biased arbitrator from giving the item to her favored agent. However, since we propose a normative theory to resolve the bargaining problem, the arbitrator is unbiased by assumption. To summarize, rationality is compatible with a strict preference for public randomization, with randomization as a matter of tie-breaking and with randomization to limit a biased arbitrator's ability to discriminate. Once we abstract from these aspects, are we still willing to give up rationality in favor of flipping the coin?

\subsection{Literature}

Other axiomatizations of the RU bargaining solution are by \textcite{pivato2009twofold} and \textcite{baris2018timing}. \textcite{pivato2009twofold} considers preferences over bargaining solutions and then imposes axioms on these preferences. This differs from the standard approach, established by \textcite{nash1950bargaining}, where axioms are imposed on the bargaining solution directly. \textcite{baris2018timing} adapts the characterization of the utilitarian bargaining solution by \textcite{myerson1981utilitarianism} to a utility-scale invariant setting. Their central axiom can be interpreted as a dynamic consistency condition. When facing uncertainty over what the bargaining set will be, the arbitrator makes a plan, which specifies for each possible bargaining set a utility vector. Then, the expected utility vector must be the solution in the expected bargaining set. \textcite{cao1982preference} identifies necessary axioms for the RU bargaining solution but does not provide a characterization. Note that \textcite{cao1982preference}, \textcite{pivato2009twofold} and \textcite{baris2018timing} assume the bargaining set to be convex, whereas we contribute to the literature on bargaining over non-convex sets \autocite{kaneko1980extension, zhou1997nash, mariotti1998extending, mariotti1998nash, conley1996extension, denicolo2000nash, ok1999revealed, nagahisa2002axiomatization, xu2006alternative, zambrano2016vintage}.

Related to the RU bargaining solution is a large literature on preference aggregation, which characterizes a relative utilitarian rule \autocite{karni1998impartiality, dhillon1999relative, segal2000let, borgers2017revealed, marchant2019utilitarianism, sprumont2019relative, brandl2021beliefaveraging, peitler2023putting, karni2024impartiality}. Especially related is \textcite{peitler2023putting}. In an application of their aggregation rule, the RU bargaining solution is derived from the most preferred element of a menu-dependent social preference, where the menu consists of the alternatives that are better for every agent than the disagreement point.

Related to rational bargaining is a literature on the rationalizability of bargaining rules \autocite{peters1991independence, bossert1994rational, sanchez2000rationality, xu2013rationality}.  A bargaining rule is rationalizable if it can be interpreted as the most preferred alternative under a single preference relation over utility vectors, which applies independently of the bargaining set. In this literature, rationality is understood as satisfying the weak axiom of revealed preference (or similar conditions). We take rationality to mean that the bargaining solution is consistent with the maximization of a vNM preference relation, but we do not insist that it is the same preference relation for every bargaining set.

Implementations of the RU bargaining solution have been provided by \textcite{miyagawa2002subgameperfect} and \textcite{hagiwara2020subgameperfect}. They, however, consider the case of only two agents and strictly convex bargaining sets. Note that under strictly convex bargaining sets, there is a unique RU-optimal alternative. Our game, on the other hand, can have multiple equilibrium outcomes, each corresponding to one of the multiple RU-optimal alternatives. Our implementation is similar to the ones by \textcite{moulin1984implementing} and \textcite{moore1988subgame}. \textcite{moulin1984implementing} implements the Kalai-Smorodinsky solution for convex bargaining sets. As in our implementation, a proposal has to be sequentially approved by all players. \textcite{moore1988subgame} provides general results on the implementability of social welfare functions in sub-game perfect equilibrium. The first stage of our game is similar to theirs, however, they require each individual to report the state (i.e., the entire utility function of every player), whereas we ask players to only report the utility for one alternative.

\section{Axiomatization}\label{ch2:secaxioms}

Let $N:=\{1,...,n\}$ be a set of agents where $n\in \mathbb{N}$ and $n\geq 2$. A bargaining problem $(S,d)$ consists of a bargaining set $S\subseteq \mathbb{R}^n$ and a disagreement point $d\in S$. We simplify notation by normalizing the disagreement point to $d=(0,...,0)$ and write $S$ instead of $(S,d)$. Let $\mathbb{R}_+$ denote the positive real numbers, including 0. We restrict attention to bargaining sets $S$ where (i) $S\subseteq \mathbb{R}^n_+$ (ii) $S$ is compact, and (iii) for each $i\in N$, there exists a $u\in S$ such that $u_i>0$. We denote the domain of bargaining sets that satisfy these properties by $\mathcal{S}$. A bargaining solution $f$ is a correspondence that assigns to every $S \in \mathcal{S}$ a non-empty subset of $S$.

Our central axiom is \textit{no benefit of randomization}. For any $R \subset \mathbb{R}^n$, let $\conv R$ denote the convex hull of $R$.

\begin{axiomp}{NBR}[No Benefit of Randomization]\label{ch2:NBR}
For every $S\in \mathcal{S}$, $$f(S) \subseteq f(\conv S).$$
\end{axiomp}

We illustrate the axiom with the help of the following example. Consider a finite bargaining set $S$, arising from the allocation of finitely many indivisible items. Now consider a lottery $l$ that realizes some allocation $v\in S$ with probability $\lambda\in(0,1)$ and another allocation $v'\in S$ with probability $1-\lambda$. Since the utilities in $v$ and $v'$ express individuals' vNM preferences, Agent $i$'s utility of $l$ is $\lambda v_i + (1-\lambda) v'_i$. Hence, if $l$ were feasible, it would be a point in the bargaining set at $\lambda v + (1-\lambda) v'$. See the left-hand side of Figure \ref{ch2:fig:nbr} for an illustration when $n=2$. \begin{figure}[h]
\centering
\begin{subfigure}{.5\textwidth}
  \centering
  \includegraphics[scale=1]{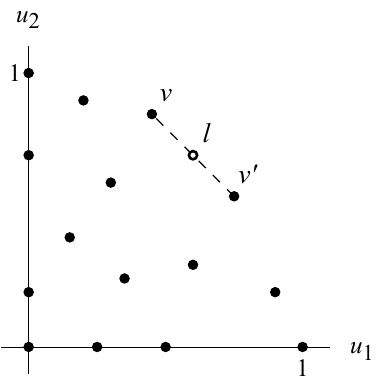}
  \caption{$S$.}
\end{subfigure}%
\begin{subfigure}{.5\textwidth}
  \centering
  \includegraphics[scale=1]{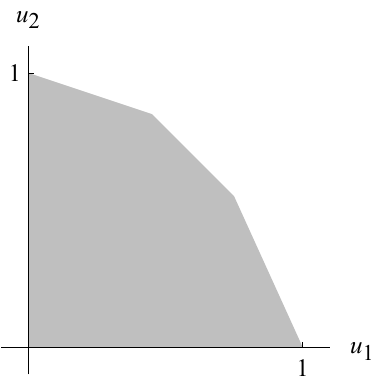}
  \caption{$\conv S$.}
\end{subfigure}
\caption{Convexification of the bargaining set.}
\label{ch2:fig:nbr}
\end{figure} Consequently, if every lottery over the allocations in $S$ were feasible, the bargaining set would be the convex hull of $S$. This case is depicted on the right-hand side of Figure \ref{ch2:fig:nbr}. \ref{ch2:NBR} says that if an allocation is optimal in the non-convex bargaining set $S$, where randomization isn't feasible, then this allocation must still be optimal in the convex bargaining set $\conv S$, where randomization is feasible. Hence, randomization doesn't introduce a lottery, strictly better for the group than the best allocation, as this would contradict that the group acts in accordance with the vNM postulates. Note, however, that randomization can introduce new utility vectors that are equally optimal as an allocation, which is why \ref{ch2:NBR} does not demand $f(S) = f(\conv S)$.

Note that \ref{ch2:NBR} has no bite when the domain is restricted to convex bargaining sets, which is the classic setting of \textcite{nash1950bargaining} and \textcite{kalai1975other}. However, these popular bargaining solutions have been extended in various ways to domains that include non-convex sets. We find that these extensions violate \ref{ch2:NBR}. In the following, we demonstrate this for the extensions by \textcite{xu2006alternative}. Let $f^{\text{Nash}}$ denote the Nash bargaining solution and $f^{\text{KS}}$ denote the Kalai-Smorodinsky bargaining solution as in \textcite{xu2006alternative}.

\begin{proposition} Both $f^{\text{Nash}}$ and $f^{\text{KS}}$ violate \ref{ch2:NBR}.
\end{proposition}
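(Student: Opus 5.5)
The plan is to exhibit a single finite (hence compact, non-convex) bargaining set $S\in\mathcal{S}$ on which both extensions select an alternative that is no longer selected in $\conv S$. The natural candidate is the indivisible-item example from the introduction, with $n=2$ and $S=\{(0,0),(1,0),(0,1)\}$. The one thing to keep in mind about \textcite{xu2006alternative} is that on non-convex sets these extensions select, among the weakly Pareto optimal points of $S$, the ones that are best according to the relevant criterion (maximal Nash product, or the lexicographic/leximin comparison of normalized gains for KS). Because the Nash product vanishes at both $(1,0)$ and $(0,1)$, it cannot discriminate on this $S$. So the extensions must pick at least one of these points, and by symmetry they pick both.

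The steps are as follows.
\begin{enumerate}
\item Verify that $S\in\mathcal{S}$: it is finite and hence compact, it lies in $\mathbb{R}^2_+$, and each agent has a point with positive utility.
\item Compute $f^{\text{Nash}}(S)$ and $f^{\text{KS}}(S)$ from the definitions in \textcite{xu2006alternative}. Both should contain $(1,0)$ (and $(0,1)$), because $(0,0)$ is Pareto dominated and nonemptiness plus symmetry forces the two extreme points.
\item Observe that $\conv S$ is the triangle with vertices $(0,0)$, $(1,0)$ and $(0,1)$, which is convex. There the extensions coincide with the classical solutions, so both give the unique point $(\tfrac12,\tfrac12)$: it maximizes $u_1u_2$ on $u_1+u_2\le 1$, and it is the intersection of the Pareto frontier with the diagonal joining $d$ to the ideal point $(1,1)$.
\item Conclude that $(1,0)\in f(S)$ but $(1,0)\notin f(\conv S)$ for $f\in\{f^{\text{Nash}},f^{\text{KS}}\}$, which violates \ref{ch2:NBR}.
\end{enumerate}

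The main obstacle is step 2: I must make sure the cited extensions really return the extreme points rather than, say, $(0,0)$ or an empty set. If a degenerate zero product causes trouble under Xu and Yoshihara's exact definition, I would instead use a slightly perturbed finite set such as $S=\{(0,0),(1,\epsilon),(\epsilon,1)\}$ with small $\epsilon>0$. There, by symmetry and strict positivity, both extensions select $\{(1,\epsilon),(\epsilon,1)\}$ unambiguously. In $\conv S$, however, the unique symmetric efficient point $\bigl(\tfrac{1+\epsilon}{2},\tfrac{1+\epsilon}{2}\bigr)$ has a strictly larger Nash product and lies on the KS ray, so it is the unique solution for both. The same conclusion then follows.
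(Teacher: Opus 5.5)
Your strategy matches the paper's: take a symmetric non-convex two-agent set whose convex hull has a unique symmetric Nash/KS point. Your Nash computations are also fine; on the first set the argmax is all of $S$, which still contains $(1,0)$. However, Step~2 is wrong for $f^{\text{KS}}$, and so is the corresponding claim for your fallback set.

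As the paper's own computation shows, the KS extension it uses is neither Pareto optimal nor a leximin rule on non-convex sets. It selects the largest point of $S$ on the segment from $d$ to the utopian point $m(S)$. That is why the paper obtains $f^{\text{KS}}(S)=\{(x,x)\}$ for $S=\{u\in[0,1]^2:u_1\le x \text{ or } u_2\le x\}$, even though $(x,x)$ is dominated by $(x,1)$; a leximin rule would return $\{(1,x),(x,1)\}$ there. On both of your sets the only point of $S$ on the diagonal is $(0,0)$. So neither $(1,0)$ nor $(1,\epsilon)$ belongs to $f^{\text{KS}}(S)$, and your witness in Step~4 fails for KS; the Pareto-plus-symmetry reasoning does not apply to this solution. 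On the fallback set, $(0,0)$ is even strictly dominated by $(1,\epsilon)$, so a weak-Pareto version of the definition returns nothing there, and the KS extension is not well defined on that set. Relatedly, this ray construction is meant for comprehensive sets, and neither of your finite sets is comprehensive. The first one also has no strictly positive point.

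The repair is easy because Step~2 is not needed at all, so the ``main obstacle'' you flag is not one. \ref{ch2:NBR} fails whenever $f(S)$ is some nonempty subset of $S$ while $f(\conv S)$ is a singleton lying outside $S$, whatever $f(S)$ happens to be. On $\{(0,0),(1,0),(0,1)\}$ your conclusion therefore survives for KS with witness $(0,0)$, provided one grants that the solution is defined there. To stay safely inside the domain, replace your fallback set by its comprehensive hull $\{u\in[0,1]^2:u_1\le\epsilon \text{ or } u_2\le\epsilon\}$, which is exactly the paper's example with $x=\epsilon$. Its convex hull is $\{u\in[0,1]^2:u_1+u_2\le 1+\epsilon\}$, where both solutions give the unique point $\bigl(\tfrac{1+\epsilon}{2},\tfrac{1+\epsilon}{2}\bigr)$, and this point is not in $S$. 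The paper additionally records $f^{\text{Nash}}(S)=\{(1,\epsilon),(\epsilon,1)\}$ and $f^{\text{KS}}(S)=\{(\epsilon,\epsilon)\}$, but only the fact that these points lie in $S$ is used.
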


\begin{proof}
Assume $n=2$ and consider the barging set $$S=\{(u_1,u_1)\in [0,1]^2:u_1\leq x\text{ or }u_2\leq x\}$$ for some $x\in(0,1)$. Then $f^{\text{Nash}}(S)=\{(1,x),(x,1)\}$ and $f^{\text{KS}}(S)=\{(x,x)\}$. However, $f^{\text{Nash}}(\conv S)=f^{\text{KS}}(\conv S)=\{(\frac{1+x}{2},\frac{1+x}{2})\}$. Hence, \ref{ch2:NBR} is violated. Figure \ref{ch2:fig:prop} illustrates this for the Nash solution.
\end{proof}

\begin{figure}[h]
\centering
\begin{subfigure}{.5\textwidth}
  \centering
  \includegraphics[scale=1]{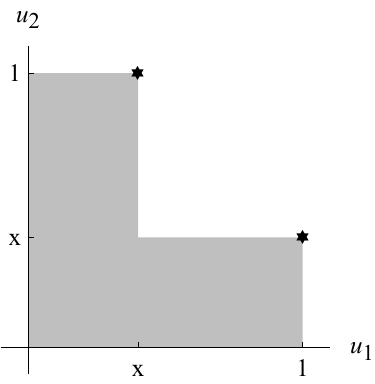}
  \caption{$S$.}
\end{subfigure}%
\begin{subfigure}{.5\textwidth}
  \centering
  \includegraphics[scale=1]{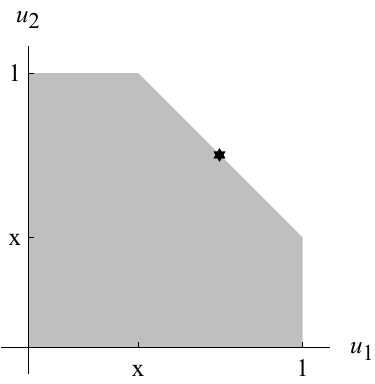}
  \caption{$\conv S$.}
\end{subfigure}
\caption{Violation of \ref{ch2:NBR} by $f^{\text{Nash}}$. Stars indicate the solution.}
\label{ch2:fig:prop}
\end{figure}

Another consequence of the vNM axioms is that the arbitrator cannot be strictly worse off under randomization. If two utility vectors are both optimal from the perspective of the arbitrator, then a lottery over these vectors, assuming it is feasible, must also be optimal. This is captured by the following axiom.

\begin{axiomp}{CONV}[Convexity]\label{ch2:CS}
For every $S \in \mathcal{S}$, $f(S)$ is convex whenever $S$ is convex.
\end{axiomp}

Note that both the Nash and Kalai-Smorodinsky bargaining solutions trivially satisfy this axiom since these solutions are singletons whenever the bargaining set is convex.

Since the prominent bargaining solutions violate \ref{ch2:NBR} and are therefore not rational, we are in need of an alternative solution. Besides the aforementioned axioms, this solution should satisfy agreed-upon desiderata. Both \textcite{nash1950bargaining} and \textcite{kalai1975other} agree that a solution should be Pareto efficient, invariant to the utility-scale, and that it should give equal treatment to symmetric agents. These axioms have been formulated under the assumption that the solution is single-valued. In the following, we translate these axioms to our setting, where the solution can be set-valued.

\begin{axiomp}{PO}[Pareto Optimality]\label{ch2:PO}
For every $S\in \mathcal{S}$, if $u\in f(S)$, then there is no $v\in S$ such that $v \neq u$ and $v_i \geq u_i$ for all $i\in N$.
\end{axiomp}

We say that $\alpha$ is a positive linear transformation if there exists $k_1,...,k_n > 0$ such that for any $R \subseteq \mathbb{R}^n$, $\alpha(R)=\{u\in \mathbb{R}^n:(k_1 u_1,...,k_n u_n) \in R\}$.

\begin{axiomp}{INV}[Invariance]\label{ch2:INV}
For every $S \in \mathcal{S}$ and positive linear transformation $\alpha$, $$f(\alpha(S))=\alpha(f(S)).$$
\end{axiomp}

For any $R \subseteq \mathbb{R}^n$, we say that $R$ is symmetric if, for every $u\in R$, any permutation of $u$ is in $R$ as well.

\begin{axiomp}{SYM}[Symmetry]\label{ch2:SYM}
For every $S\in \mathcal{S}$, if $S$ is symmetric and $u\in f(S)$, then $u=(x,...,x)$ for some $x\in\mathbb{R}$.
\end{axiomp}

Note that \ref{ch2:SYM} violates \ref{ch2:PO} in non-convex settings. To see this, consider $n=2$ and $S=\{(0,0),(1,0.9),(0.9,1)\}$. Then \ref{ch2:SYM} would demand $f(S)=\{(0,0)\}$, a violation of \ref{ch2:PO}. Therefore, for domains that include non-convex bargaining sets, symmetry needs to be weakened.

\begin{axiomp}{WSYM}[Weak Symmetry]\label{ch2:WSYM}
For every $S\in\mathcal{S}$, if $S$ is symmetric, then so is $f(S)$.
\end{axiomp}

For the fourth and final axiom, \textcite{nash1950bargaining} has independence of irrelevant alternatives (IIA) and \textcite{kalai1975other} has monotonicity. Note that IIA would not be compatible with the axioms we have imposed so far (\ref{ch2:NBR}, \ref{ch2:CS}, \ref{ch2:PO}, \ref{ch2:INV}, and \ref{ch2:WSYM}) and there is no obvious extension of monotonicity to set-valued solutions. However, there is a weaker version of IIA that is satisfied by both the Nash and Kalai-Smorodinsky solution, which we call \textit{weak IIA}.\footnote{Weak IIA and variations thereof have already appeared in \textcite{yu1973class}, \textcite{roth1977independence}, \textcite{cao1982preference}, \textcite{imai1983individual}, \textcite{dubra2001asymmetric}, \textcite{nagahisa2002axiomatization}, \textcite{xu2006alternative} and \textcite{rachmilevitch2019egalitarianism}.} Furthermore, this axiom replaces monotonicity in axiomatizations of the Kalai-Smorodinsky solution in non-convex settings \autocite{nagahisa2002axiomatization, xu2006alternative}. Since weak IIA is a common denominator of the Nash and Kalai-Smorodinsky solution, we will impose it as well. Before we introduce this axiom, let us first translate Nash's IIA to a setting with set-valued solutions.

\begin{axiomp}{IIA}[Independence of Irrelevant Alternatives]\label{ch2:IIA}
For every $S,S'\in \mathcal{S}$, if $S' \subset S$ and $f(S)\cap S'$ is non-empty, then $$f(S')=f(S)\cap S'.$$
\end{axiomp}

Next, we state weak IIA. For any $S\in \mathcal{S}$ and $i\in N$ let $m_i(S):=\max\{u_i:u\in S\}$. Furthermore, let $m(S):=\left(m_i(S)\right)_{i\in N}$. Following \textcite{yu1973class}, we call $m(S)$ the \textit{utopian point} of $S$.

\begin{axiomp}{WIIA}[Weak IIA]\label{ch2:WIIA}
For every $S,S'\in \mathcal{S}$, if $S' \subset S$, $m(S')=m(S)$ and $f(S)\cap S'$ is non-empty, then $$f(S')=f(S)\cap S'.$$
\end{axiomp}

\ref{ch2:IIA} says that removing points from the bargaining set does not change what is optimal from the perspective of the group. \ref{ch2:WIIA} weakens this condition, by imposing the former demand only in cases where the removal of points does not change the utopian point. The utopian point $m(S)$ is the maximal possible utility of each agent under the bargaining set $S$. While the utopian point typically isn't feasible (i.e., $m(S)\notin S$), it is an anchor point that allows us to relate the utilities of different agents. Since vNM utilities are unique only up to a positive affine transformation, comparisons of absolute utility levels across different individuals are meaningless. However, by \ref{ch2:INV} we can normalize every bargaining set such that $m(S)=(1,...,1)$. Then all utilities express how well-off each agent is relative to their best possible outcome. In contrast to absolute utility levels, this measure can be meaningfully compared across individuals. Removing a point $u\in S$ from $S$ such that $m_i(S\setminus\{u\})<m_i(S)$ for some $i\in N$ would require us to re-normalize the bargaining set, which would change how points other than $u$ are perceived by the group. Unlike \ref{ch2:IIA}, \ref{ch2:WIIA} allows this change of perspective to influence what is collectively optimal.

We show that the above axioms characterize the \textit{relative utilitarian} bargaining solution. Define $f^{\text{RU}}$ such that for every $S\in \mathcal{S}$, $$f^{\text{RU}}(S)=\argmax_{u\in S} \sum_{i\in N} \frac{u_i}{m_i(S)}.$$

\begin{theorem}\label{ch2:RU theorem}
$f$ satisfies \ref{ch2:NBR}, \ref{ch2:CS}, \ref{ch2:PO}, \ref{ch2:INV}, \ref{ch2:WSYM} and \ref{ch2:WIIA} if and only if $f\equiv f^{\text{RU}}$.
\end{theorem}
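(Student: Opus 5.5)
We verify the ``if'' direction first and then prove uniqueness.

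\textbf{Sufficiency.} $f^{\text{RU}}$ maximizes a linear functional with positive weights over a compact set, so it is nonempty.

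\ref{ch2:PO} holds because all weights $1/m_i(S)$ are strictly positive. \ref{ch2:INV} holds because a positive linear transformation rescales $m_i$ and $u_i$ by the same factor, so the objective is unchanged up to the relabelling. \ref{ch2:WSYM} holds because for symmetric $S$ all $m_i(S)$ coincide, so the objective is symmetric.

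\ref{ch2:CS} holds because the argmax of a linear function over a convex set is convex. \ref{ch2:WIIA} holds because $m(S')=m(S)$ means the objective is the same on $S$ and $S'$. The argmax over a subset that meets the original argmax is therefore the intersection.

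\ref{ch2:NBR} holds because $m(\conv S)=m(S)$ and the maximum of a linear function over $\conv S$ equals its maximum over $S$. Hence maximizers in $S$ remain maximizers in $\conv S$.

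\textbf{Necessity.} Let $f$ satisfy the axioms and fix $S\in\mathcal S$. By \ref{ch2:INV}, normalize so that $m(S)=(1,\dots,1)$; it then suffices to show $f(S)=\argmax_{u\in S}\sum_i u_i$. Let $\mu=\max_{u\in S}\sum_i u_i$, and note $\mu\ge 1$ since each unit vector direction is attained by some point with $u_i=1$.

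The key auxiliary set is
\[
T=\conv\Bigl(\{u\in\mathbb R^n_+:\textstyle\sum_i u_i\le \mu,\ u_i\le 1\ \forall i\}\Bigr),
\]
the part of the simplex of level $\mu$ cut by the unit box. This set is convex, symmetric, compact, contains $S$, and has $m(T)=(1,\dots,1)=m(S)$ because $\mu\ge1$.

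We claim $f(T)$ is exactly the face $F=\{u\in T:\sum_i u_i=\mu\}$. By \ref{ch2:PO}, $f(T)\subseteq F$, since every point of $T$ off $F$ is dominated within $T$ (raise some coordinate below 1).

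For the reverse inclusion, \ref{ch2:WSYM} gives that $f(T)$ is a nonempty symmetric subset of $F$. \ref{ch2:CS} then gives that it is convex. Averaging a point over all its permutations shows that the barycenter $(\mu/n,\dots,\mu/n)$ lies in $f(T)$. This alone does not give all of $F$, so we shrink the set. The set $F$ is the convex hull of the permutations of its extreme points, so it suffices to show each extreme point $e$ of $F$ belongs to $f(T)$.

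For this, use a non-convex set $R$ consisting of the orbit of $e$ under permutations together with $\{0\}$ and the unit-type points needed to keep $m(R)=(1,\dots,1)$. The extreme points of $F$ have coordinates in $\{0,1\}$ plus one fractional coordinate, and $\mu\ge1$ lets us choose $e$ with some coordinate equal to 1, so the orbit already attains every $m_i=1$. Then $R$ is symmetric and finite, and every element of the orbit is Pareto optimal in $R$. By \ref{ch2:WSYM} and \ref{ch2:PO}, $f(R)$ is a nonempty symmetric subset of the orbit, hence the whole orbit.

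\ref{ch2:NBR} now gives that the orbit lies in $f(\conv R)$. Since $\conv R\subseteq T$ with the same utopian point, and the barycenter lies in both $f(T)$ and $\conv R$, \ref{ch2:WIIA} gives $f(\conv R)=f(T)\cap\conv R$. Hence $e\in f(T)$, and by convexity $f(T)=F$.

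Finally, $S\subseteq T$, $m(S)=m(T)$, and $f(T)\cap S=F\cap S=\argmax_S\sum_i u_i\neq\emptyset$. Applying \ref{ch2:WIIA} yields $f(S)=F\cap S$, which is the RU solution.

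\textbf{Main obstacle.} The delicate step is showing $f(T)$ is the whole face $F$ and not just the symmetric barycenter. Symmetry plus convexity only yield the barycenter, so the finite symmetric orbit set together with \ref{ch2:NBR} and \ref{ch2:WIIA} is essential there. Care is needed to ensure that the orbit set has utopian point $(1,\dots,1)$ and that $\conv R$ meets $f(T)$. The barycenter of the orbit lies in $\conv R$, which settles the latter. If $e$ had no coordinate equal to 1, we would instead add the scaled unit vectors $\epsilon\mathbf e_i$ to $R$; these are dominated and so do not disturb the argument.
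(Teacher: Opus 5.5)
Your argument is correct. Its skeleton matches the paper's: normalize by \ref{ch2:INV}, show that on the cut simplex $T$ the solution is the entire face $F$, then pass to $S\subseteq T$ by \ref{ch2:WIIA}. In the paper's notation, $T=R_{x^*}=\conv S_{x^*}$ and $F=Q_{x^*}=\conv P_{x^*}$.

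The two proofs differ only in how they obtain $F\subseteq f(T)$. The paper uses the finite symmetric set $S_x=P_x\cup\bigcup_{k=0}^{\lfloor x\rfloor}\pi(g(k))$. It is padded with the orbits of the $0$--$1$ vectors precisely so that $\conv S_x$ equals $R_x$ (Lemma~\ref{ch2:biglemma}). Then \ref{ch2:PO} and \ref{ch2:WSYM} give $f(S_x)=P_x$, and \ref{ch2:NBR} lands directly on $R_x$, with \ref{ch2:WIIA} used only once, at the end.

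You use the leaner set $\pi(e)\cup\{0\}$, whose hull is in general a proper subset of $T$. You bridge the gap with an extra application of \ref{ch2:WIIA}. That step is available because your symmetrization (\ref{ch2:WSYM} and \ref{ch2:CS} applied to $T$ itself) puts the barycenter $(\mu/n,\dots,\mu/n)$ in $f(T)\cap\conv R$.

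This trick spares you the half of Lemma~\ref{ch2:biglemma} that identifies $\conv S_x$. The cost is one more appeal each to \ref{ch2:WIIA} and \ref{ch2:WSYM}. You still need the other half: the extreme points of $F$ form the single orbit of $h(\mu)$, so $F=\conv\pi(e)$. You assert this, whereas the paper proves it by its perturbation argument.

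Your sufficiency check, which the paper leaves implicit, is fine. One small slip: your fallback of adding $\epsilon\mathbf{e}_i$ would not restore $m(R)=(1,\dots,1)$. As you observe, though, that case never arises, since every extreme point of $F$ has a coordinate equal to $1$ when $\mu\ge 1$.
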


For $n=2$, the proof is short, simple, and resembles the one of \textcite{nash1950bargaining}. For this reason, we present it here in the main section. A proof for a general number of agents is in Appendix \ref{ch2:appA}.

\begin{proof}
First, consider the bargaining set $S_x=\{(0,0),(1,0),(0,1),(1,x),(x,1)\}$ for some $x\in[0,1]$, as illustrated in Figure \ref{ch2:fig:proof1}. \begin{figure}[h]
\center
\includegraphics[scale=1]{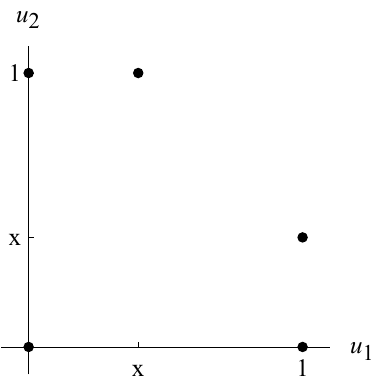}
\caption{$S_x$.} 
\label{ch2:fig:proof1}
\end{figure} Note that $P_x:=\{(1,x),(x,1)\}$ is the set of Pareto optimal points in $S_x$ and by \ref{ch2:PO}, $f(S_x)\subseteq P_x$. Furthermore, $S_x$ is symmetric. So by \ref{ch2:WSYM}, \begin{equation}\label{ch2:n2eq1}f(S_x)= P_x = \{(1,x),(x,1)\}.\end{equation}

Second, consider the bargaining set $\conv S_x$. By (\ref{ch2:n2eq1}) and \ref{ch2:NBR}, $P_x \subseteq f(\conv S_x)$. As $\conv S_x$ is convex, $\conv P_x\subseteq f(\conv S_x)$ by \ref{ch2:CS}. Note that $\conv P_x = \{u\in[0,1]^2:u_1+u_2 = 1+x\}$ and $\conv S_x=\{u\in[0,1]^2:u_1+u_2 \leq 1+ x\}$. Hence, $\conv P_x$ is the set of Pareto-optimal points in $\conv S_x$. Hence, by \ref{ch2:PO},
\begin{equation}\label{ch2:n2eq2}f(\conv S_x)=\left\{u\in[0,1]^2:u_1+u_2 = 1+x\right\}.\end{equation}
See Figure \ref{ch2:fig:proof2} for an illustration. The dashed line indicates the solution.

\begin{figure}[h]
\center
\includegraphics[scale=1]{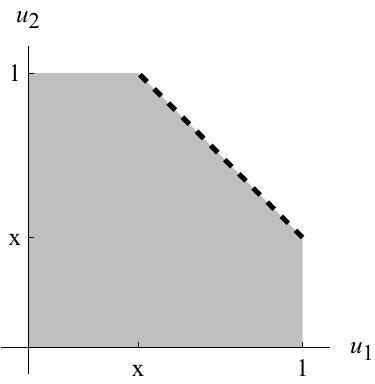}
\caption{$\conv S_x$.} 
\label{ch2:fig:proof2}
\end{figure}

Third, consider any bargaining set $S\in\mathcal{S}$ where $m(S)=(1,1)$. Let $x^*:=\max_{u\in S}(u_1+u_2)-1$, which must exist due to our assumption of compactness. Note that $S\subseteq \conv S_{x^*}$ and that $S\cap f(\conv S_{x^*})$ is non-empty. Hence, (\ref{ch2:n2eq2}) and \ref{ch2:WIIA} imply
\begin{equation}\label{ch2:n2eq3}f(S)=S\cap f(\conv S_{x^*})=\argmax_{u\in S} (u_1 + u_2).\end{equation}

Fourth, note that for any $S\in\mathcal{S}$ there exists a bargaining set $S' \in\mathcal{S}$ with $m(S')=(1,1)$ and a positive linear transformation $\alpha$ such that $\alpha(S)=S'$. Then by (\ref{ch2:n2eq3}) and \ref{ch2:INV}, \begin{equation}\label{ch2:n2eq4}f(S)=\argmax_{u\in S}\left(\frac{u_1}{m_1(S)}+\frac{u_2}{m_2(S)}\right).\end{equation}
\end{proof}

Above we have identified axioms that characterize the RU bargaining solution. Next, we show that these axioms are independent. We drop each of the axioms in Theorem \ref{ch2:RU theorem} and show that there is a solution, other than the RU solution, that satisfies the remaining axioms. 
\begin{enumerate}[label={(\arabic*)}]\setlength\itemsep{0em}
\item The Nash solution satisfies all axioms but \ref{ch2:NBR}.
\item Consider the solution that selects all maximal elements of the leximax preorder whenever $m(S)=(1,...,1)$. The solution for $m(S)\neq(1,...,1)$ is given by \ref{ch2:INV}. This solution satisfies all axioms but \ref{ch2:CS}.
\item $f(S)=\{(0,...,0)\}$ for all $S\in\mathcal{S}$ satisfies all axioms but \ref{ch2:PO}.
\item The utilitarian solution (Section \ref{ch2:sec:util}) satisfies all axioms but \ref{ch2:INV}.
\item The asymmetric relative utilitarian solution (Section \ref{ch2:sec:asym}) satisfies all axioms but \ref{ch2:WSYM}.
\item Let $\omega_i(S)=1+\max\{u_{i}:u\in S \text{ and }u_{i+1}=m_{i+1}(S)\}m_i(S)^{-1}$ if $i<n$ and $\omega_n(S)=1+\max\{u_{n}:u\in S \text{ and }u_{1}=m_{1}(S)\}m_n(S)^{-1}$. Then the solution $f(S)=\argmax_{u\in S} \sum_{i\in N} \omega_i(S) m_i(S)^{-1} u_i$ satisfies all axioms but \ref{ch2:WIIA}.
\end{enumerate}
This proves that the axioms are independent.

\section{Implementation}\label{ch2:secimplementation}

In the previous section, we have provided a normative theory on how a group should come to a solution in a bargain problem. However, how a group will arrive at a solution depends on the game form that describes the bargaining process. In the spirit of the Nash program \autocite{nash1953twoperson}, we implement the RU bargaining solution in sub-game perfect equilibrium (SPE), meaning we identify a game form where the RU bargaining solution arises as the SPE outcome. In line with the existing literature on implementation of the Nash bargaining solution \autocite{nash1953twoperson, rubinstein1982perfect, binmore1986nash, herrero1989nash, howard1992social, chae1994nperson, krishna1996multilateral, trockel2000implementations, miyagawa2002subgameperfect, trockel2002universal, guth2004nash, gomez2006achieving, britz2010noncooperative, okada2010nash, anbarci2013asymmetrica, britz2014convergence, abreu2015dynamic, qin2019implementation, hagiwara2020subgameperfect, hu2020bargaining, harstad2023pledgeandreview} and the Kalai-Smorodinsky bargaining solution \autocite{moulin1984implementing, trockel1999unique, miyagawa2002subgameperfect, haake2009two, anbarci2011nash, hagiwara2020subgameperfect}, we assume that the players have full information, i.e., know the utility functions of the other players.\footnote{See \textcite{serrano2005fifty, serrano2021sixtyseven} for a review of the literature on the Nash program.}

The bargaining protocol we propose can be used by an arbitrator who wants to bring about a desirable outcome but does not know the utility functions of the agents. For applicability in actual bargaining situations, it is important that the game is simple and intuitive. For instance, we would not be satisfied with a game where individuals have to report the state of the world, i.e., the entire utility function of every player, as in \textcite{moore1988subgame}. Note that, as in the previous section, we allow for non-convex bargaining problems, such that multiple alternatives can be optimal. This complicates the game somewhat, as players have to coordinate on one of multiple equilibria.

In the next section, we outline the basic setting.

\subsection{Preliminaries}

Let $A$ be a set of alternatives. We designate one alternative in $A$ as the disagreement alternative and denote it by $a_{\dis}$. Let $\Theta$ be a set of states. For every $\theta\in\Theta$ and $i\in N$, let $u_i^\theta:A\to[0,1]$ denote Player $i$'s vNM utility function over $A$, normalized such that $u_i^\theta(a_{\dis})=0$ and $\max_{a\in A} u_i^\theta(a)=1$. For every $\theta\in\Theta$, let $S_\theta$ denote the bargaining set associated with $\theta$, i.e., $S_\theta:=\{(u_1^\theta(a),...,u_n^\theta(a)):a\in A\}$. We assume that for each $\theta\in\Theta$, $S_\theta \in \mathcal{S}$. For any $\theta \in \Theta$, we say that $a\in A$ is \textit{RU-optimal} under $\theta$ if $(u_1^\theta(a),...,u_n^\theta(a))\in f^{\text{RU}}(S_\theta)$. For a game form $g$ and any $\theta\in \Theta$, we denote by $(g,\theta)$ the game with the game form $g$ and players' preferences according to $u_1^\theta$ to $u_n^\theta$. We say that a game form $g$ \textit{fully implements} the RU bargaining solution in SPE if, for every $\theta\in\Theta$ and $a\in A$, $a$ is RU-optimal under $\theta$ if and only if $a$ is an SPE outcome of $(g,\theta)$. Unfortunately, full implementation is not possible, which we discuss in Section \ref{ch2:sec:fullimp}. Instead, we fully implement the subset of RU-optimal alternatives that are strictly better than $a_{\dis}$ for every player. We call these alternatives \textit{strictly RU-optimal}. Note that this \textit{weakly implements} the RU bargaining solution, meaning every SPE outcome is RU-optimal.

A strictly RU-optimal alternative does not always exist. Consider, for example, the case of allocating a single indivisible item without randomization. Giving the item to any of the agents is RU-optimal, but not strictly RU-optimal, as none of the remaining agents improves relative to the disagreement alternative. In order to implement the strictly RU-optimal alternatives, we must ensure that at least one such alternative exists. Hence, we impose the following restriction on $\Theta$.

\begin{assumptionp}{A1}\label{ch2:A2}
For every $\theta\in\Theta$, there exists an $a\in A$ such that $a$ is strictly RU-optimal under $\theta$.
\end{assumptionp}

An additional assumption is imposed for convenience. We assume that for each agent there exists an alternative that is among the best for this agent and equal to the disagreement point for the remaining agents.

\begin{assumptionp}{A2}\label{ch2:A3}
For every $\theta\in\Theta$ and $i\in N$, there exists a $b^\theta_i\in A$ such that $u_i^\theta(b^\theta_i)=1$ and $u_j^\theta(b^\theta_i)=0$ for all $j\in N\setminus\{i\}$.
\end{assumptionp}

\ref{ch2:A3} holds true for many bargaining situations. For example, when bargaining over a surplus or over the allocation of goods, awarding everything to Player $i$ would correspond to the alternative $b^\theta_i$.

Next, we present the game form.

\subsection{The Game}\label{ch2:sec:game}

We now define the game form that fully implements the set of strictly RU-optimal outcomes. We denote this game form by $g^*$. The game form has two stages, an initial stage and, depending on the actions in the initial stage, an approval stage.

\textit{Initial Stage:} Each player simultaneously makes a proposal $(a,p)$, consisting of an alternative $a\in A$ and a list of $n$ strictly positive probabilities $p \in (0,1]^n$. We distinguish three cases:
\begin{enumerate}
\item [(1)] Assume all players make the same proposal $(a,p)$. Then $a$ is selected.
\item [(2)] Assume there are exactly two distinct proposals $(a,p)$ and $(a',p')$.
\begin{enumerate}
\item [(2a)] If $\sum_{i\in N} p_i = \sum_{i\in N} p'_i$, then $a_{\text{dis}}$ is selected.
\item [(2b)] If $\sum_{i\in N} p_i \neq \sum_{i\in N} p'_i$, then go to the \textit{approval stage}.
\end{enumerate}
\item [(3)] Assume there are three or more distinct proposals. Then the alternative of the proposal with the highest sum of probabilities that lies below $n$ is selected. In case of a tie, one of those proposals is chosen at random.
\end{enumerate}

\textit{Approval Stage:} For the approval stage to be reached, there must be exactly two distinct proposals $(a,p)$ and $(a',p')$. Assume without loss of generality that $\sum_{i\in N} p_i > \sum_{i\in N} p'_i$. Assign players into a sequence such that those who proposed $(a',p')$ come before those who proposed $(a,p)$. Any such sequence will do. According to this sequence, let players sequentially decide between \textit{accept} and \textit{reject}. If all players accept, then $a$ is selected. If Player $i$ rejects, then with probability $p_i$, Player $i$ can choose an alternative and with probability $1-p_i$, $a_{\text{dis}}$ is selected.

Figure \ref{ch2:fig:tree} illustrates the approval stage for three players, where Player 1 and 3 proposed $(a,p)$ and Players 2 proposed $(a',p')$.

\begin{figure}[h]
\centering
\begin{tikzpicture}
  \node (node1) at (0*\myx, 0) {2};
  \node (node2) at (1*\myx, 0) {1};
  \node (node3) at (2*\myx, 0) {3};
  \node (nodefinal) at (3*\myx, 0) {a};
  
  \node (below1) at (0*\myx, -1.5) {N};
  \node (below2) at (1*\myx, -1.5) {N};
  \node (below3) at (2*\myx, -1.5) {N};
  
  \node[align=center, anchor=north] (dis1) at (0*\myx + 0.25*\myx, -3) {$a_{\text{dis}}$};
  \node[align=center, anchor=north] (dis2) at (1*\myx + 0.25*\myx, -3) {$a_{\text{dis}}$};
  \node[align=center, anchor=north] (dis3) at (2*\myx + 0.25*\myx, -3) {$a_{\text{dis}}$};
  
  \node[align=center, anchor=north, execute at begin node=\setlength{\baselineskip}{2ex}] (cho1) at (0*\myx - 0.25*\myx, -3) {2 \\ decides};
  \node[align=center, anchor=north, execute at begin node=\setlength{\baselineskip}{2ex}] (cho2) at (1*\myx - 0.25*\myx, -3) {1 \\ decides};
  \node[align=center, anchor=north, execute at begin node=\setlength{\baselineskip}{2ex}] (cho3) at (2*\myx - 0.25*\myx, -3) {3 \\ decides};
  
  \draw[->] (node1) -- (below1) node[midway, left] {reject};
  \draw[->] (node2) -- (below2) node[midway, left] {reject};
  \draw[->] (node3) -- (below3) node[midway, left] {reject};
  
  \draw[->] (node1) -- (node2) node[midway, above] {accept};
  \draw[->] (node2) -- (node3) node[midway, above] {accept};
  \draw[->] (node3) -- (nodefinal) node[midway, above] {accept};
  
  \draw[->] (below1) -- (dis1) node[midway, right] {$1-p_2$};
  \draw[->] (below2) -- (dis2) node[midway, right] {$1-p_1$};
  \draw[->] (below3) -- (dis3) node[midway, right] {$1-p_3$};
  
  \draw[->] (below1) -- (cho1) node[midway, left] {$p_2$};
  \draw[->] (below2) -- (cho2) node[midway, left] {$p_1$};
  \draw[->] (below3) -- (cho3) node[midway, left] {$p_3$};
\end{tikzpicture}
\caption{Example of approval stage.}
\label{ch2:fig:tree}
\end{figure}

Now that we have described the game $g^*$, we can state the following theorem.

\begin{theorem}\label{ch2:theoremimpl}
Assume \ref{ch2:A2} and \ref{ch2:A3} are satisfied. Then $g^*$ fully implements the strictly RU-optimal alternatives. Formally, for every $\theta\in\Theta$ and $a\in A$, $a$ is an SPE outcome of $(g^*,\theta)$ if and only if $a$ is strictly RU-optimal under $\theta$.
\end{theorem}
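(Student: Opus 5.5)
Throughout, I use that under $\theta$ utilities are normalized with $m(S_\theta)=(1,\dots,1)$, so $a$ is RU-optimal iff it maximizes $\sum_i u_i^\theta(a)$. The first step is to solve the approval stage by backward induction. By \ref{ch2:A3}, a rejecting player $i$ who gets to choose picks $b_i^\theta$. So a rejection by $i$ gives $i$ the payoff $p_i$ and every other player $0$. The key observation is a ``domino'' effect. Suppose the continuation after some player would end in a rejection by a later player. Then every earlier player $k$ strictly prefers to reject, since this gives $p_k>0$ instead of $0$. Consequently, in any SPE of the approval stage on $(a,p)$, the outcome is $a$ only if $u_k^\theta(a)\ge p_k$ for all $k$. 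If $u_k^\theta(a)>p_k$ for all $k$, then $a$ is the unique SPE outcome. Otherwise the first player in the sequence rejects, so every player of the losing proposal ends with payoff at most $p_k$.

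\emph{Sufficiency.} Let $a^*$ be strictly RU-optimal, and let every player propose $(a^*,p)$ with $p=u^\theta(a^*)\in(0,1]^n$. I check unilateral deviations by some $i$ to $(a',p')$; with only one deviator there are exactly two distinct proposals, so case (3) is never reached.
\begin{itemize}
\item If the sums are equal, the outcome is $a_{\dis}$, which gives $i$ payoff $0<u_i^\theta(a^*)$.
\item If $\sum p'<\sum p$, then $(a^*,p)$ goes to approval with $i$ first. The others accept, since they are indifferent and I select acceptance on ties. Player $i$ can obtain at most $\max\{u_i^\theta(a^*),p_i\}=u_i^\theta(a^*)$.
\item If $\sum p'>\sum p=\max_b\sum_k u_k^\theta(b)\ge\sum_k u_k^\theta(a')$, then some $k$ has $p'_k>u_k^\theta(a')$, so $a'$ is not accepted by everyone. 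Player $i$ is placed after all non-deviators, so by the domino argument the first non-deviator rejects and $i$ gets $0$.
\end{itemize}
Hence the profile is an SPE with outcome $a^*$.

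\emph{Necessity.} Take an SPE with outcome $a$ and split by the initial-stage case. In case (1), with common proposal $(a,p)$, I derive two consequences from profitable deviations.
\begin{itemize}
\item First, $p_i\le u_i^\theta(a)$ for all $i$. Otherwise $i$ deviates to a proposal with a smaller sum, which sends $(a,p)$ to approval with $i$ first, and $i$ rejects to get $p_i>u_i^\theta(a)$. Since $p_i>0$, this gives $u_i^\theta(a)>0$ for every $i$.
\item Second, $a$ is RU-optimal. If not, take a strictly RU-optimal $a^*$ (which exists by \ref{ch2:A2}) and some $i$ with $u_i^\theta(a^*)>u_i^\theta(a)$; such an $i$ exists because the sums differ. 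Player $i$ deviates to $(a^*,q)$ with $q_k=u_k^\theta(a^*)-\varepsilon>0$. Then $\sum q>\sum u^\theta(a)\ge\sum p$, everyone strictly accepts, and $i$ gains.
\end{itemize}
Together these make $a$ strictly RU-optimal.

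The main obstacle is excluding the remaining cases: (2a), (2b) on the equilibrium path, and (3). For (2a), the outcome is $a_{\dis}$. A player can switch to the other proposal, or to a third one, and by \ref{ch2:A3} and the approval analysis I expect to find a deviation that yields a positive payoff. For (3), a player whose proposal is not selected can propose a new claim with sum in the interval $(\text{current max},\,n)$ whose alternative is $b_i^\theta$. This makes his own proposal the selected one, unless the number of distinct proposals drops to two; that subcase is handled via the approval stage. For (2b) on path, the domino argument pins down the outcome. I then show that some player of the losing proposal can profitably deviate, either by joining the winning proposal so that $a$ is implemented unanimously, or by overbidding. I expect this to reduce to the same two inequalities as in case (1), though the bookkeeping for $n=2$ versus $n\ge3$, where case (3) is unavailable or available respectively, will need separate care.
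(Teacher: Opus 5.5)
\textbf{Verdict: the proposal has a genuine gap in the necessity direction.} Two parts are sound. Your sufficiency argument works: when you exhibit an SPE you may specify that a rejecter picks $b_k^\theta$ and that ties go to acceptance. Your treatment of Case (1) is also correct. Deriving $p_i\le u_i^\theta(a)$ componentwise immediately gives $u_i^\theta(a)>0$ for every $i$, which is tidier than the paper's split on $\sum_i p_i$ versus $\sum_i u_i(a)$.

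The gap is everything after Case (1). Cases (2a), (2b) and (3) are only announced (``I expect''), and (2b) with a rejection on the equilibrium path is exactly where the paper does its real work. Moreover, the tool you plan to use there fails for an \emph{arbitrary} SPE. The domino argument assumes the rejecter chooses $b_k^\theta$. In general the rejecter may choose any $c$ with $u_k^\theta(c)=1$, and such a $c$ can give earlier movers positive utility. The paper explicitly allows this: the rejecting Player 2 chooses $a$ with $u(a)=(x,1)$, giving Player 1 the payoff $p'_2x$. So ``otherwise the first player in the sequence rejects'' is false in general and cannot pin down on-path outcomes. Likewise, ``joining the winning proposal'' is not generally profitable: for $n=2$ it just produces Case (1) with the opponent's alternative.

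What closes (2b) is to combine, for one player $j$ facing the other side's proposal $(\hat a,\hat p)$, the two deviations you already use in Case (1). Let $\Sigma:=\max_{b\in A}\sum_i u_i^\theta(b)$.
\begin{itemize}
\item Undercutting $\sum_i\hat p_i$ and rejecting first guarantees $j$ at least $\hat p_j$.
\item Overbidding with $(a^*,u^\theta(a^*)-\varepsilon\mathbf{1})$, for a strictly RU-optimal $a^*$, is unanimously accepted whenever $\sum_i\hat p_i<\Sigma$.
\end{itemize}
In the paper's $n=2$ case, Player 2 moves first, rejects, and picks $a$ with $u(a)=(x,1)$. Undercutting gives $p''_1\le p'_2x$. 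If $p''_1+p''_2\ge\Sigma\ge x+1$, this forces $p''_2=p'_2=1$ and $p''=u(a)$, so $a$ is strictly RU-optimal and realized for sure. If $p''_1+p''_2<\Sigma$, the overbid is profitable. For $n=2$ you also need the sub-case where the first mover accepts and the proposer of the winning proposal rejects it himself; the same squeeze handles it.

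For $n\ge3$, the threat of Case (3) forces every player who shares a proposal to receive expected utility $1$. Hence the outcome is a deterministic $c$ with $u_i^\theta(c)=1$ for $i\neq j$, where $j$ is the lone proposer. Deterring $j$'s overbid then requires $\sum_i\hat p_i\ge\Sigma$, unless $c$ is already strictly RU-optimal. This gives $\Sigma\le\sum_i\hat p_i\le (n-1)+u_j^\theta(c)=\sum_i u_i^\theta(c)\le\Sigma$, so $c$ is strictly RU-optimal.

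Cases (2a) and (3) are easy but must be written out.
\begin{itemize}
\item In (3), the number of distinct proposals cannot drop to two, because the other $n-1$ proposals already contain two distinct ones. So your worried subcase never arises. Any player with utility below $1$ proposes $b_i^\theta$ with a sum strictly between the current highest sum below $n$ and $n$.
\item In (2a), a lone proposer undercuts and rejects first, obtaining $p_i>0$. A player who shares a proposal triggers Case (3).
\end{itemize}
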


We sketch the proof here and provide a formal proof of the theorem in Appendix \ref{ch2:app:impl}. We fix some $\theta \in \Theta$ and omit it from now on. First, consider the approval stage and let $(a,p)$ denote the proposal with the higher sum of probabilities. Note that Player $i$'s expected utility of rejecting is $p_i \max_{b\in A} u_i(b) + (1-p_i) u_i(a_\text{dis})=p_i$. Hence, if $p_i > u_i(a)$ for all $i\in N$, then the unique SPE is that all players accept and $a$ is implemented. However, if $p_i < u_i(a)$ for some $i\in N$, then some player will reject. Hence, one can interpret the vector $p$ as a \textit{report} about the utility of $a$ for each player and the approval stage as a test of whether this report was truthful. We say that the report $p$ of a proposal $(a,p)$ is \textit{inflated} if $\sum_{i\in N} p_i > \sum_{i\in N} u_i(a)$ and we say that it is \textit{truthful} if $p=u(a)$.

Next, consider the initial stage and assume that all players propose $(a,u(a))$ for some strictly RU-optimal alternative $a$. We show that no deviation $(a',p')$ is profitable for Player $j$. There are three options for Player $j$, to report a higher sum, an equal sum or a lower sum of probabilities. Reporting an equal sum would lead to $a_\text{dis}$. Reporting a higher sum would mean that $a'$ is put to the test in the approval stage and that $j$ will decide last. However, because there is no alternative with a higher sum of utilities than $a$, $p'$ is necessarily inflated and $a'$ will be rejected by some player before $j$. Reporting a lower sum would mean that $a$ is put to the test in the approval stage and that $j$ will decide first. However, since the report of the proposal $(a,u(a))$ was truthful, there is an equilibrium where all accept, and $a$ is implemented anyway. In conclusion, there is no profitable deviation for Player $j$.

Conversely, consider the case where all players propose $(a,p)$, but $a$ isn't strictly RU-optimal. If $p$ is inflated, then there must be one Player $j$ for whom $p_j>u(a)$. This player can then put $(a,p)$ to the test by deviating to a report with a lower sum. Player $j$ is allowed to decide first and rejects. If $p$ is not inflated, then there is some player who would prefer a strictly RU-optimal alternative $a'$ to $a$. This player can propose $a'$ and report a probability for each player that is slightly below the true utility of $a'$. Then $a'$ is put to the test in the approval stage and every player accepts.

Above, we have shown that a non-optimal alternative cannot arise in equilibrium through a unanimous proposal, i.e., Case (1). However, we have yet to show that such an alternative cannot arise in equilibrium via Cases (2b) or (3). Note that we have designed Case (3) similar to the integer game in \textcite{moore1988subgame}, such that it cannot arise in equilibrium. A player can always ``outbid" the others by choosing an even higher sum below $n$. Furthermore, with three or more players, one can deviate from Case (2b) to induce Case (3) and implement their most preferred alternative. Showing that no sub-optimal equilibrium exists for two players is more involved and we leave this to the formal proof in the appendix.

\subsection{Full Implementation}\label{ch2:sec:fullimp}

In this section, we discuss the case of full implementation. The following proposition shows that it is not possible to fully implement the RU bargaining solution.

\begin{proposition}\label{ch2:prop:full}
Let $n=2$ and assume that for every $S\in\mathcal{S}$ there exists a $\theta\in\Theta$ such that $S_\theta=S$. Then there doesn't exist an extensive game form $g$ that fully implements the RU bargaining solution.
\end{proposition}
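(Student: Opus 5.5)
The plan is a Maskin-monotonicity style argument adapted to subgame perfection. I would find two states $\theta$ and $\theta'$ and an alternative $c$ with three properties: $c$ is RU-optimal under $\theta$, $c$ is not RU-optimal under $\theta'$, and some SPE of $(g,\theta)$ with outcome $c$ remains an SPE of $(g,\theta')$. This contradicts full implementation. The hypothesis that every $S\in\mathcal{S}$ equals some $S_\theta$ lets us choose $\theta,\theta'$ freely. Because $g$ is a fixed game form, a strategy profile and its outcome (a lottery over $A$) do not depend on the state, so only the players' incentives change.

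The natural source of such a pair is the indifference that RU-optimality tolerates at the boundary, the same phenomenon that forces the restriction to strictly RU-optimal alternatives in Theorem~\ref{ch2:theoremimpl}. Concretely, I would take four relevant alternatives $a_{\dis}, b_1, b_2, c$. Under $\theta$ set $u^\theta(b_1)=(1,0)$, $u^\theta(b_2)=(0,1)$, $u^\theta(c)=(1,0)$, and send every other alternative to $(0,0)$. Then $f^{\text{RU}}(S_\theta)=\{(1,0),(0,1)\}$, so $c$ is RU-optimal and full implementation forces an SPE $\sigma$ of $(g,\theta)$ with outcome $c$. Along the path of $\sigma$, Player~2 receives $0$, which is her minimal utility. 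So Player~2 has no strictly profitable deviation anywhere that leads to outcomes she values at $0$. Player~1 obtains her maximum $1$ on path, so she cannot gain on path either.

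The state $\theta'$ should perturb utilities only in directions that leave every decision in $\sigma$ optimal. Player~1's utility is unchanged. For Player~2, I would lower nothing (she is already at $0$); instead I would make her preference over alternatives she is indifferent to in $\theta$ strictly favour the ones $\sigma$ already selects at her nodes. If that fails, I would switch the roles: set $u^{\theta'}(b_1)=(1,\delta)$ for small $\delta>0$, keep $u^{\theta'}(c)=(1,0)$, and instead use an SPE of $(g,\theta)$ with outcome $b_1$ where needed. In either version the payoff requirements are that $c$ becomes Pareto dominated under $\theta'$ (hence not RU-optimal by Theorem~\ref{ch2:RU theorem}), while at each node of $\sigma$ the chosen action's continuation value stays weakly maximal. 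I would then verify subgame by subgame, using the one-shot deviation principle on an extensive form, that $\sigma$ remains an SPE of $(g,\theta')$.

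The main obstacle is this last verification. Changing a utility changes continuation payoffs in every subgame where the changed alternative is reached with positive probability, possibly through chance moves. So "on path" optimality does not by itself transfer off path, and the perturbation must be chosen so that it weakly raises exactly the continuation values of the actions prescribed by $\sigma$. I would first try to pick $\sigma$ among the SPEs of $(g,\theta)$ so that at every node where Player~2 moves, her prescribed action already yields the largest probability of the alternative whose utility is raised. If that selection argument fails, I would fall back on a two-player-specific argument: exploit that each node has a single mover and that Player~1's payoffs are unchanged, and push the perturbation into Player~1's ranking instead.
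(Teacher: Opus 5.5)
There is a genuine gap, and it is the step you flag as the main obstacle. Your strategy needs an SPE of $(g,\theta)$ with outcome $c$ to remain an SPE of $(g,\theta')$. Subgame perfection does not give this kind of monotonicity: this is exactly why SPE implementation escapes Maskin monotonicity, as Moore and Repullo show.

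Your concrete perturbations run into this problem.
\begin{enumerate}
\item Raising $u_2(b_1)$ to $\delta$ lifts $b_1$ out of Player~2's lower contour set at $c$, so it is not even a monotonic transformation. If $g$ gives Player~2, anywhere on the path of $\sigma$, an action that reaches $b_1$ with positive probability, this is a harmless indifference under $\theta$ but a profitable deviation under $\theta'$. Full implementation at $\theta$ gives you no way to select an SPE that avoids such actions.
\item Making Player~2 strictly favour what $\sigma$ selects defeats itself. On the path $\sigma$ selects $c$, so raising $u_2(c)$ makes $c$ RU-optimal rather than dominated.
\item Perturbing only Player~1's ranking, e.g.\ $u^{\theta'}(b_2)=(x,1)$, is monotonic at $c$. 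But it still changes Player~1's choices in off-path subgames among alternatives she was indifferent over under $\theta$. That can make a deviation by Player~2 into such a subgame profitable.
\end{enumerate}
Separately, the hypothesis only says that every $S\in\mathcal{S}$ equals some $S_\theta$. It does not let you prescribe which alternatives carry which utility vectors. So you cannot choose a $\theta$ with two distinct alternatives $b_1,c$ at $(1,0)$, together with a $\theta'$ differing from it only at $b_1$.

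The missing idea is that no second state is needed. Fix one $\theta$ with $(1,0),(0,1)\in f^{\text{RU}}(S_\theta)$, which the hypothesis does supply. Full implementation gives SPEs $s^+$ and $s^-$ with payoffs $(1,0)$ and $(0,1)$. Your observation that the losing player sits at her minimal utility $0$ is the right ingredient, but it must be applied to both equilibria:
\begin{itemize}
\item $s^-$ is a Nash equilibrium and $0$ is Player~1's minimum, so every strategy of Player~1 against $s^-_2$ gives Player~1 exactly $0$.
\item Symmetrically, every strategy of Player~2 against $s^+_1$ gives Player~2 exactly $0$.
\end{itemize}
Hence $(s^+_1,s^-_2)$ is a Nash equilibrium with payoff $(0,0)$. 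The paper makes it subgame perfect by keeping its path and handling unilateral deviations as follows: after a deviation by Player~2, play continues as in $s^+$; after a deviation by Player~1, it continues as in $s^-$. Every deviator is thereby held at $0$, and every continuation is itself subgame perfect. The resulting SPE yields utility $(0,0)\notin f^{\text{RU}}(S_\theta)$, which contradicts full implementation within a single state.
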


A formal proof is in Appendix \ref{ch2:app:full}. The intuition for this result goes as follows. Consider a two-agent bargaining problem $S\in\mathcal{S}$ where both $(1,0)$ and $(0,1)$ are in $f^{\text{RU}}(S)$. An example of such a problem is the division of a dollar among risk-neutral agents, where any efficient division is RU-optimal, including allocating the entire dollar to one of the agents. Now consider a state $\theta$ such that $S_\theta=S$ and a game form $g$ that implements the solution. Then there must exist two strategy profiles $s^+,s^-$ that are SPE of $(g,\theta)$ and where $u^\theta(s^+):=(u_1^\theta(s^+),u_2^\theta(s^+))=(1,0)$ and $u^\theta(s^-)=(0,1)$. Since by assumption, 0 is the minimal utility for both players, $u^\theta_1(s_1,s_2^-)=0$ for all $s_1$ and $u^\theta_2(s^+_1,s_2)=0$ for all $s_2$. This, in turn, implies that $(s^+_1,s^-_2)$ is a Nash equilibrium with pay-offs $(0,0)$. In the formal proof, we then use $(s^+_1,s^-_2)$ to construct an SPE with pay-offs $(0,0)$. Since, $(0,0)$ is not in $f^{\text{RU}}(S)$, the RU bargaining solution cannot be fully implemented.

We feel that the division of a dollar among two risk neutral agents is a canonical problem that the implementation should be able to address. Hence, we did not want to rule out such problems, for instance by assuming that the bargaining set is strictly convex as in \textcite{miyagawa2002subgameperfect} and \textcite{hagiwara2020subgameperfect}. Instead, we decided to exclude solutions that assign a pay-off of 0 to one of the players.

\section{Other Rational Solutions}\label{ch2:sec:other}

\subsection{Asymmetric Relative Utilitarian Solution}\label{ch2:sec:asym}

In Section \ref{ch2:secaxioms} we have imposed a symmetry axiom as a normative requirement to treat all agents of the group fairly. In applications, however, we might want to take into account that individuals have different bargaining power. This can lead, in otherwise symmetric situations, to asymmetric outcomes. It is, therefore, of interest to generalize a given bargaining solution to a $n-1$ parameter family of solutions, where each parameter is a weight on an individual's utility, representing their bargaining power. Asymmetric generalizations have been provided for the Nash solution \autocite{harsanyi1972generalized,kalai1977nonsymmetric} and the Kalai-Smorodinsky solution \autocite{dubra2001asymmetric}. In the following, we provide the first generalization of the relative utilitarian solution. We say that $f$ is an \textit{asymmetric relative utilitarian solution} if there exists $(\mu_1,...,\mu_n)\in(0,1)^n$ with $\sum_{i\in N} \mu_i = 1$ such that for every $S\in\mathcal{S}$, 
$$f(S)= \argmax_{u\in S} \sum_{i\in N} \mu_i\frac{u_i}{m_i(S)}.$$ We denote this solution by $f^{\text{ARU}}$. Unfortunately, for the characterization of $f^{\text{ARU}}$ it doesn't suffice to merely drop the symmetry axiom.

\begin{proposition}\label{ch2:prop:asym} There exists an $f$ such that $f\neq f^{\text{ARU}}$ and $f$ satisfies \ref{ch2:NBR}, \ref{ch2:CS}, \ref{ch2:PO}, \ref{ch2:INV} and \ref{ch2:WIIA}.
\end{proposition}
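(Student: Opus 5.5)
The plan is to exhibit an explicit \emph{serial dictatorship} that satisfies all the axioms except weak symmetry, and then show that no choice of weights $\mu$ turns $f^{\text{ARU}}$ into it. The point is that the weights in $f^{\text{ARU}}$ are required to lie in the open interval $(0,1)$. A lexicographic rule behaves like the degenerate limit $\mu_1\to 1$, which that definition excludes.

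Concretely, let $\succ_{\text{lex}}$ denote the lexicographic order on $\mathbb{R}^n$: $u\succ_{\text{lex}} v$ if $u_k>v_k$ at the first coordinate $k$ where they differ. I define
$$f(S)=\{u^*(S)\},\qquad u^*(S):=\text{the }\succ_{\text{lex}}\text{-maximal element of } S.$$
This element exists because $S$ is compact: maximize $u_1$ over $S$, then maximize $u_2$ over the resulting compact face, and so on. It is unique because $\succeq_{\text{lex}}$ is a total order that is antisymmetric on vectors.

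Next I would verify the five axioms one by one.
\begin{itemize}
\item \ref{ch2:PO}: if some $v\neq u^*$ has $v\geq u^*$ componentwise, then $v\succ_{\text{lex}}u^*$, a contradiction.
\item \ref{ch2:INV}: a positive linear transformation multiplies each coordinate by a positive constant. This preserves $\succ_{\text{lex}}$, so the lex-maximum is carried to the lex-maximum.
\item \ref{ch2:CS}: $f(S)$ is a singleton, hence convex.
\item \ref{ch2:WIIA}: I will show the stronger \ref{ch2:IIA}. If $S'\subset S$ and $u^*(S)\in S'$, then $u^*(S)$ is still the lex-maximum in $S'$, so $f(S')=\{u^*(S)\}=f(S)\cap S'$.
\item \ref{ch2:NBR}: this needs the one slightly substantive observation. $\succeq_{\text{lex}}$ is a vector-space order, i.e.\ $v\preceq_{\text{lex}}u$ and $w\preceq_{\text{lex}}u$ imply $\lambda v+(1-\lambda)w\preceq_{\text{lex}}u$. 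Applying this inductively over finite convex combinations, every point of $\conv S$ is $\preceq_{\text{lex}}u^*(S)$. Hence $u^*(\conv S)=u^*(S)$.
\end{itemize}

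Finally, I would show $f\neq f^{\text{ARU}}$ for every admissible $\mu$. Fix any $\mu\in(0,1)^n$ with $\sum_i\mu_i=1$. Pick $\varepsilon\in(0,1)$ with $\varepsilon<\mu_2/\mu_1$, and take
$$S=\{(0,\dots,0),\,(1,0,\dots,0),\,(1-\varepsilon,1,\dots,1)\}.$$
Then $S\in\mathcal{S}$ and $m(S)=(1,\dots,1)$. The lexicographic rule selects $(1,0,\dots,0)$. The ARU objective, however, gives
$$\mu_1(1-\varepsilon)+\sum_{i\geq 2}\mu_i\;>\;\mu_1,$$
so $f^{\text{ARU}}(S)=\{(1-\varepsilon,1,\dots,1)\}$, and the two solutions differ.

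The only step that needs care is \ref{ch2:NBR}, specifically the compatibility of the lexicographic order with convex combinations. I would argue it by looking at the first coordinate where $v$ or $w$ falls strictly below $u$: the convex combination is strictly below $u$ there and equal to $u$ on all earlier coordinates. Everything else is routine.
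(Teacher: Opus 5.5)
Your proof is correct, but it uses a different counterexample from the paper's.

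\textbf{What the paper does.} Working with $n=2$, the paper gives two counterexamples:
\begin{itemize}
\item $f'(S)=\argmax_{u\in f^{\text{RU}}(S)}u_1$, i.e.\ relative utilitarianism with lexicographic tie-breaking in favour of Agent~1;
\item $f''$, a rule with linear but non-parallel indifference curves on normalized sets, extended by \ref{ch2:INV}.
\end{itemize}
It leaves the separation of these from every $f^{\text{ARU}}$ implicit.

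\textbf{What you do.} You use a single serial dictatorship over all of $S$. All your checks go through: existence and uniqueness by compactness, \ref{ch2:PO}, \ref{ch2:INV}, even full \ref{ch2:IIA}, and \ref{ch2:NBR} via the compatibility of $\succeq_{\text{lex}}$ with convex combinations. Your explicit separating set for every interior $\mu$ is also correct. Your route is shorter, uniform in $n$, and fully explicit about the separation.

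\textbf{What the paper's route buys.} Its two examples isolate two different ways of failing to be ARU:
\begin{itemize}
\item a discontinuity: $f'$ violates \ref{ch2:C} but satisfies \ref{ch2:I};
\item a failure of vNM independence: $f''$ is ruled out by \ref{ch2:I}.
\end{itemize}
This is why both \ref{ch2:C} and \ref{ch2:I} are added in Theorem~\ref{ch2:theo:asym}. The proposition itself does not need this.

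Your dictatorship is of the first kind only. It violates \ref{ch2:C}: for $n=2$ take $S=\{(0,0),(1,0),(0,1)\}$ and $u=(0,1)$. Then the left side is always a singleton, while the right side $\{(1,0),(\lambda,1)\}$ never is. It does, however, satisfy \ref{ch2:I}. So on its own it would not show that \ref{ch2:I} is needed.

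\textbf{A small correction to your heuristic.} You describe the lexicographic rule as the excluded limit $\mu_1\to 1$, but this is loose. Admitting $\mu=(1,0,\dots,0)$ would give $\argmax_{u\in S}u_1$, which violates \ref{ch2:PO}. So your example does not merely exploit the openness of $(0,1)^n$. It is a genuinely discontinuous rule, in the same family as $f'$.

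None of this affects the validity of your argument.
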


We prove the proposition by providing two solutions as counterexamples. Each of these solutions can be ruled out by an additional axiom. These two additional axioms, together with \ref{ch2:NBR}, \ref{ch2:CS}, \ref{ch2:PO}, \ref{ch2:INV} and \ref{ch2:WIIA} then characterize $f^{\text{ARU}}$. For ease of exposition, we provide these counterexamples for $n=2$. 

The first counterexample is the solution $f'$, which selects among the (symmetric) relative utilitarian outcomes the one that most benefits Agent 1, formally $$f'(S)=\argmax_{u\in f^{\text{RU}}(S)} u_1.$$ First, we show that the axioms, as stated in Proposition \ref{ch2:prop:asym}, are satisfied. It is easy to see that \ref{ch2:PO}, \ref{ch2:INV} and \ref{ch2:WIIA} are satisfied. Since the solution is a singleton, \ref{ch2:CS} is trivially satisfied. While convexification can increase the set of relative utilitarian outcomes, it cannot change the best relative utilitarian outcome of each agent. Hence, \ref{ch2:NBR} is satisfied as well. Next, we identify an axiom that rules out such a solution. Note that the solution is discontinuous, in the sense of vNM continuity. To see this, consider the following example, illustrated by Figure \ref{ch2:fig:asym2}. \begin{figure}[h]
\center
\includegraphics[scale=1]{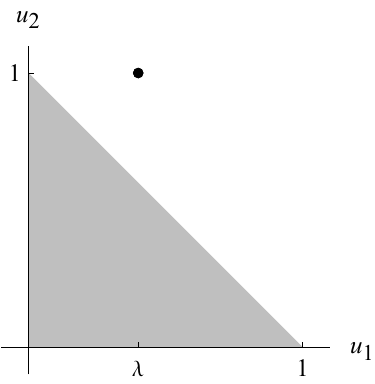}
\caption{Division of a dollar, with the risky technology at $(\lambda,1)$.} 
\label{ch2:fig:asym2}
\end{figure} An arbitrator has to divide a dollar among two risk-neutral agents. In addition, the arbitrator could choose to invest the dollar into a risky technology, which gives one dollar to Agent 2 for sure and with probability $\lambda$ one dollar to Agent 1. If $\lambda=1$, then the technology is Pareto-dominant and selected by $f'$. If $\lambda=0$, the technology is just as good as giving the dollar to Agent 2. Under $f'$, the arbitrator strictly prefers to give the dollar to Agent 1. Continuity would require that for some $\lambda$, the arbitrator is indifferent between the technology and giving the dollar to Agent 1, such that both are selected by the solution. However, such a $\lambda$ does not exist under $f'$, as the solution uniquely selects the technology for every $\lambda>0$. We impose the following axiom to ensure that a bargaining solution is continuous.

\begin{axiomp}{C}[Continuity]\label{ch2:C}
For every $S \in \mathcal{S}$ and $u\in S \setminus f(S)$, there exists a $\lambda\in [0,1]$ such that $$f(S\cup\{\lambda m(S) + (1-\lambda)u \}) = f(S) \cup \{\lambda m(S) + (1-\lambda) u \}.$$
\end{axiomp}

The axiom generalizes our previous example. Consider an arbitrary bargaining set $S$ and some point $u$ in $S$ that is not selected by the solution. Add a convex combination $\lambda m(S) + (1-\lambda)u$ between $u$ and the utopian point $m(S)$ to the original bargaining set. For $\lambda=1$, any solution satisfying \ref{ch2:PO} must select the convex combination, as it is equal to the utopian point. For $\lambda=0$, the convex combination is equal to $u$ and is therefore not selected by the solution. Axiom \ref{ch2:C} then states that for some $\lambda$ between 0 and 1, the convex combination must be in the solution, together with the solution of the original bargaining set $S$.

The second counterexample is a solution that can be described by linear, but non-parallel indifference curves. See Figure \ref{ch2:fig:asym} for an illustration. \begin{figure}[h]
\center
\includegraphics[scale=1]{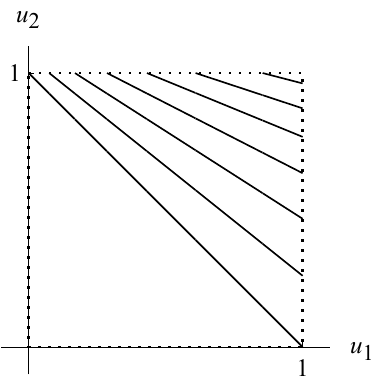}
\caption{Non-parallel indifference curves.} 
\label{ch2:fig:asym}
\end{figure} The solution selects the utility vectors on the highest indifference curve. These indifference curves are the same across all $S\in\mathcal{S}$ with $m(S)=(1,1)$. Solutions for bargaining sets with $m(S)\neq(1,1)$ are derived from \ref{ch2:INV}. We call this solution $f''$. \ref{ch2:INV} is then satisfied by assumption. It is easy to see that \ref{ch2:PO} and \ref{ch2:WIIA} are satisfied. Finally, \ref{ch2:CS} and \ref{ch2:NBR} are satisfied because indifference curves are linear. Note that the fanning-out of indifference curves is reminiscent of weighted expected utility \autocite{chew1979alphanu, chew1983generalization}. Such risk preferences violate the vNM independence axiom. Similarly, we can rule out non-parallel indifference curves through an independence axiom.

\begin{axiomp}{I}[Independence]\label{ch2:I}
For every $S \in \mathcal{S}$, if $u,v \in f(S)$, then for any $x\in [0,1]$, $$\{x m(S) + (1-x) u, x m(S) + (1-x) v \}\subseteq f(S \cup \{x m(S) + (1-x) u, x m(S) + (1-x) v \}).$$
\end{axiomp}

The axiom captures vNM independence in the bargaining setting. Consider any bargaining set, where the solution contains at least two points $u$ and $v$. It is as if the group was indifferent between these points. Now add two lotteries $l_u$ and $l_v$ that with probability $x$ give the utopian point $m(S)$ and otherwise $u$, in case of $l_u$, or $v$, in case of $l_v$. The possibility of the irrelevant alternative $m(S)$ in both $l_u$ and $l_v$ does not change the relative desirability between the two. Hence, both $l_u$ and $l_v$ must be in the solution of the new bargaining set.

Imposing axioms \ref{ch2:C} and \ref{ch2:I} in addition, leads to the asymmetric relative utilitarian bargaining solution.

\begin{theorem}\label{ch2:theo:asym}
$f$ satisfies \ref{ch2:NBR}, \ref{ch2:CS}, \ref{ch2:PO}, \ref{ch2:INV}, \ref{ch2:WIIA}, \ref{ch2:I} and \ref{ch2:C} if and only if $f\equiv f^{\text{ARU}}$.
\end{theorem}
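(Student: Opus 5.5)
The plan has two parts: sufficiency, then necessity. Sufficiency is routine: $f^{\text{ARU}}$ maximizes a linear functional with positive weights, normalized by $m(S)$.
\begin{itemize}
\item \ref{ch2:PO} holds since all weights are positive.
\item \ref{ch2:INV} holds since $m$ scales with $\alpha$.
\item \ref{ch2:NBR} holds since $m(\conv S)=m(S)$ and a linear maximum over $\conv S$ is attained on $S$.
\item \ref{ch2:CS} holds since argmax sets of linear functions over convex sets are convex.
\item \ref{ch2:WIIA} holds since $m$ is unchanged and the objective is fixed.
\item \ref{ch2:I} holds because adding points of the form $x m(S)+(1-x)u$ leaves $m$ unchanged, and $x m+(1-x)u$ and $x m+(1-x)v$ have equal value $x+(1-x)W^*$, which weakly exceeds $W^*$. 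So they are maximizers, and old maximizers remain maximizers iff $x=0$. I need to check that the axiom only asks for inclusion, which it does.
\item \ref{ch2:C} holds by choosing $\lambda$ with $\lambda\cdot 1+(1-\lambda)W(u)=W^*$, where $W(m(S))=1\ge W^*>W(u)$.
\end{itemize}

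For necessity, I normalize by \ref{ch2:INV} to $m(S)=(1,\dots,1)$ and, by \ref{ch2:WIIA}, reduce everything to understanding $f$ on convex sets of the form $T_H=\{u\in[0,1]^n:u\in H^-\}$, where $H$ is a hyperplane. The key object is the set $D=\{b_1,\dots,b_n\}$ of unit vectors $e_i$ (with $0$). I first determine the weights. Take $S=\{0,e_1,\dots,e_n\}$; by \ref{ch2:PO}, $f(S)\subseteq\{e_i\}$. I want to show every $e_i$ is selected in a set of the form $\{0\}\cup\{t_ie_i\}\cup\ldots$. Since $m$ must be preserved, I will instead use \ref{ch2:C}. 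Start from $S=\{0,e_1,\dots,e_n\}$. For each $e_i\notin f(S)$, \ref{ch2:C} gives $\lambda_i$ such that the point $\lambda_i\mathbf 1+(1-\lambda_i)e_i$ joins the solution together with the old solution. Iterating and applying \ref{ch2:WIIA} (to remove points while keeping $m$), I aim to define $\mu_i$ via the indifference levels. Concretely, for each $i$ let $c_i$ be the point on the segment $[e_i,\mathbf 1]$ that is "indifferent" to some fixed $e_j\in f(S)$. Then \ref{ch2:NBR} and \ref{ch2:CS} applied to $\conv$ of these indifferent points give a whole hyperplane piece $H$ in the solution, as in the proof of Theorem \ref{ch2:RU theorem}. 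That hyperplane through $e_j$ and the $c_i$ defines the weights $\mu$.

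The second step shows that all indifference hyperplanes are parallel to $H$. A general $S$ with $m(S)=\mathbf 1$ contains points achieving each $m_i$, but by \ref{ch2:WIIA} I can compare $S$ to a convex superset $\conv(\{0,e_1,\dots,e_n\}\cup\{\text{top level set}\})$. Here \ref{ch2:I} is the tool: if $u,v\in f(S)$, then mixing both with $\mathbf 1$ at weight $x$ keeps them jointly selected. Mixing toward $\mathbf 1$ is a homothety centered at $\mathbf 1$, so the level set through $u,v$ maps to a parallel set. Starting from the level hyperplane $H$ found in step one (which contains the $e_j$ and the $c_i$), the homotheties $x\mathbf 1+(1-x)H$ for $x\in[0,1]$ produce parallel hyperplanes, each of which must be a level set of $f$ on the corresponding convexified sets. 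Using \ref{ch2:NBR}, \ref{ch2:CS} and \ref{ch2:PO} as in the $n=2$ proof, for each level $W^*\in[\min_i\mu_i\text{-level},1]$ I build the convex set $T=\{u\in[0,1]^n:\sum\mu_iu_i\le W^*\}$ and show $f(T)$ is exactly its top face. Then \ref{ch2:WIIA} transfers this to any $S$ with $\max\sum\mu_iu_i=W^*$, since $S\subseteq T$ with the same utopian point. I must also make sure $T$ contains the points realizing $m(T)=\mathbf 1$, which requires $W^*\ge\max_i\mu_i$. Every $S$ with $m(S)=\mathbf 1$ satisfies this, because $S$ contains a point with $u_i=1$.

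The main obstacle is step one: the continuity axiom only gives indifference points that are added to a set. I must verify, using \ref{ch2:WIIA} repeatedly while keeping utopian points fixed, that these indifferences are consistent, so that the $c_i$ all lie on one hyperplane through the selected $e_j$'s and that hyperplane is exactly $\{\sum\mu_iu_i=\max_i\mu_i\}$. I also need to handle the case where several $e_j$ are selected simultaneously, which pins the $\mu_j$ to be equal. If this proves delicate, my fallback is to do the $n=2$ case explicitly, as in the main text, and then induct using faces of the cube.
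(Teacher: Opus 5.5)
For $n=2$ your outline is the paper's proof. Appendix~\ref{ch2:appD} writes out only that case and asserts that general $n$ ``follows similarly''. Your sufficiency check, which the paper omits, is fine.

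For general $n$, though, the difficulty is not where you locate it. Step one is unproblematic:
\begin{itemize}
\item After adding $c_i$ you have $f(S\cup\{c_i\})=f(S)\cup\{c_i\}$, so the remaining $e_k$ are still unselected and \ref{ch2:C} applies again.
\item \ref{ch2:NBR} and \ref{ch2:CS} then put the $(n-1)$-simplex $F=\conv\{e_j,c_i\}$ into $f(T_0)$, where $T_0=\conv(\{0,e_1,\dots,e_n\}\cup\{c_i\})$.
\item \ref{ch2:PO} forces the normal $\mu$ of $F$ to be strictly positive; otherwise a relative-interior point of $F$ is dominated within $F$. It also gives $f(T_0)=F$.
\end{itemize}
The trouble is step two. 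A first, repairable issue is that \ref{ch2:I} is a two-point axiom. It puts $p_u=x\mathbf 1+(1-x)u$ and $p_v$ into $f(T_0\cup\{p_u,p_v\})$, a different set for each pair, whereas you need all $n$ shifted vertices in a single solution. One fix: the nonempty set $f(T_0\cup p_V)$ meets some $T_0\cup p_{V\setminus\{u\}}$, and \ref{ch2:WIIA} with induction on $|V|$ then pulls in all of $p_V$.

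The serious issue is your claim that \ref{ch2:NBR}, \ref{ch2:CS} and \ref{ch2:PO} give $f(T)=$ top face of $T=\{u\in[0,1]^n:\sum_i\mu_iu_i\le W^*\}$ ``as in the $n=2$ proof''.
\begin{itemize}
\item What you actually control is $K=\conv(T_0\cup p_V)$, whose top face $x\mathbf 1+(1-x)F$ lies in $[x,1]^n$.
\item For $n=2$, $K=T$: the homothety centred at $\mathbf 1$ fixes the lines $u_i=1$, so the two shifted points are exactly the endpoints of the top face of $T$.
\item For $n\ge 3$ the top face of $T$ has vertices with zero coordinates. For $\mu=(\tfrac12,\tfrac3{10},\tfrac15)$ it has the vertex $(1,\tfrac12,0)$ at $W^*=\tfrac{13}{20}$, and already the vertex $(0,1,1)$ at $W^*=\tfrac12$.
\item Every point that \ref{ch2:I} or \ref{ch2:C} certifies has the form $x\mathbf 1+(1-x)u$ with $x>0$, so these vertices are never reached.
\item \ref{ch2:WIIA} only passes information from a set to subsets that meet its solution, so knowing $f(K)$ says nothing about $f(T)$.
\end{itemize}
Your final transfer therefore does not cover, for example, $S=\{0,e_2,e_3,(1,\tfrac12,0)\}$, which lies in no set whose solution you have determined.

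The missing idea is to stop embedding $S$ in a known superset and use a union instead. Let $c^*=\max_S\sum_i\mu_iu_i$; if $c^*=\sum_i\mu_i$ then $\mathbf 1\in S$ and \ref{ch2:PO} settles it. Let $K$ be the controlled set at level $c^*$.
\begin{itemize}
\item Applying \ref{ch2:WIIA} to $S\subseteq S\cup K$ and $K\subseteq S\cup K$ shows that $f(S\cup K)$ restricts to $f(S)$ on $S$, or to the top face of $K$ on $K$, or both.
\item Whenever a solution contains an $(n-1)$-dimensional face at one level together with a point off that level, \ref{ch2:NBR} and \ref{ch2:CS} produce an $n$-dimensional convex subset of the solution of a convex set, contradicting \ref{ch2:PO}.
\item \ref{ch2:C} pushes an unselected point of $K$, or of $\argmax_S\sum_i\mu_iu_i$, to a strictly higher level, where the same union with the controlled set at that level applies.
\end{itemize}
Together these rule out the one-sided cases and yield $f(S)=\argmax_S\sum_i\mu_iu_i$. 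Your fallback of inducting over faces of the cube does not help. Sets supported on a face are not in $\mathcal S$, and \ref{ch2:C} and \ref{ch2:I} always mix with the full utopian point, so they never stay on a face.
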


We prove the theorem in Appendix \ref{ch2:appD}.

\subsection{Utilitarian Solution}\label{ch2:sec:util}

In the conventional understanding of the bargaining context, utilities derive from individuals' vNM preferences over the available alternatives. Since a utility representation of a vNM preference relation is unique only up to a positive affine transformation, the invariance axiom ensures that the solution is insensitive to the choice of the representation. However, in alternative scenarios, the utility scale might convey information. For instance, utilities might represent an individual's willingness to pay to bring about a given social alternative. Consider such a setting and think of the two agents who bargain over a single indivisible item. If Agent 1 values the item at \$100 and Agent 2 at \$50, it seems reasonable to award the item to Agent 1. Under the invariance axiom, however, this bargaining problem should be treated identically to one where both value the item at \$50. In order to take the absolute scale of valuations into account, \ref{ch2:INV} must be dropped. Previously we argued that \ref{ch2:IIA} imposes too stringent demands on the solution when utility scales lack significance. However, if utility scales are meaningful, \ref{ch2:IIA} can be assumed in its full strength. Adopting these two modifications leads to the \textit{utilitarian bargaining solution} \autocite{myerson1981utilitarianism}.

\begin{theorem}
$f$ satisfies \ref{ch2:NBR}, \ref{ch2:CS}, \ref{ch2:PO}, \ref{ch2:WSYM} and \ref{ch2:IIA} if and only if for every $S\in \mathcal{S}$, $$f(S)=\argmax_{u\in S} \sum_{i\in N} u_i.$$
\end{theorem}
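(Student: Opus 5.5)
The plan is to prove the two directions separately: sufficiency is a routine check of each axiom, and necessity mimics the $n=2$ proof of Theorem \ref{ch2:RU theorem}. The difference is that we embed an arbitrary $S$ into a symmetric simplex and use \ref{ch2:IIA} in place of \ref{ch2:WIIA}, so that no normalization by \ref{ch2:INV} is needed.

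\emph{Sufficiency.} Let $f^{U}(S)=\argmax_{u\in S}\sum_{i\in N}u_i$. This set is non-empty by compactness, and each axiom is checked as follows.
\begin{itemize}
\item For \ref{ch2:NBR}: a linear function has the same maximum over $S$ and over $\conv S$, so every maximizer in $S$ remains a maximizer in $\conv S$.
\item For \ref{ch2:CS}: the argmax of a linear function over a convex set is the intersection of that set with a hyperplane, hence convex.
\item For \ref{ch2:PO}: if $v\ge u$ componentwise with $v\neq u$, then $\sum_i v_i>\sum_i u_i$, so $u$ cannot be a maximizer.
\item For \ref{ch2:WSYM}: the objective is permutation invariant, so the argmax over a symmetric set is symmetric.
\item For \ref{ch2:IIA}: if $S'\subset S$ and $f^U(S)\cap S'\neq\emptyset$, then the maximal value over $S'$ equals the maximal value over $S$. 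Hence $f^U(S')=f^U(S)\cap S'$.
\end{itemize}

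\emph{Necessity.} Fix $f$ satisfying the five axioms, take any $S\in\mathcal{S}$, and let $c:=\max_{u\in S}\sum_i u_i$. The maximum exists by compactness, and $c>0$ because some $u\in S$ has a positive coordinate.

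First consider the finite set $V_c:=\{0,ce_1,\dots,ce_n\}$, where $e_i$ denotes the $i$-th unit vector. It lies in $\mathcal{S}$ and is symmetric. Its Pareto optimal points are exactly $ce_1,\dots,ce_n$, since these are mutually incomparable and $0$ is dominated. By \ref{ch2:PO}, $f(V_c)\subseteq\{ce_1,\dots,ce_n\}$. Since $f(V_c)$ is non-empty and symmetric by \ref{ch2:WSYM}, it follows that $f(V_c)=\{ce_1,\dots,ce_n\}$.

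Next pass to the simplex $T_c:=\conv V_c=\{u\in\mathbb{R}^n_+:\sum_i u_i\le c\}$.
\begin{itemize}
\item \ref{ch2:NBR} gives $ce_i\in f(T_c)$ for every $i$.
\item \ref{ch2:CS} then gives $F_c:=\{u\in\mathbb{R}^n_+:\sum_iu_i=c\}\subseteq f(T_c)$, since $F_c=\conv\{ce_1,\dots,ce_n\}$.
\item Any $u\in T_c$ with $\sum_i u_i<c$ is Pareto dominated within $T_c$, by raising one coordinate. So by \ref{ch2:PO}, $f(T_c)=F_c$.
\end{itemize}

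Finally, $S\subseteq T_c$ because $S\subseteq\mathbb{R}^n_+$ and every point of $S$ has coordinate sum at most $c$. Moreover $f(T_c)\cap S=\argmax_{u\in S}\sum_iu_i$, which is non-empty. If $S=T_c$ we are done directly. Otherwise \ref{ch2:IIA} yields $f(S)=f(T_c)\cap S=\argmax_{u\in S}\sum_i u_i$.

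The only delicate step is the reduction to $V_c$. One must check that \ref{ch2:WSYM} together with non-emptiness forces $f(V_c)$ to contain all vertices $ce_i$, rather than merely some of them. This holds because a non-empty subset of $\{ce_1,\dots,ce_n\}$ that is closed under permutations must be the whole set. The remaining steps are routine.
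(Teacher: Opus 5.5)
Your proof is correct, and it takes a more direct route than the paper's.

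The paper reuses the machinery of Theorem \ref{ch2:RU theorem}:
\begin{enumerate}
\item It works with the symmetric sets $S_x$ built from permutations of $h(x)$ and $g(k)$, whose utopian point is $(1,\ldots,1)$.
\item It identifies $\conv S_x$ via Lemma \ref{ch2:biglemma}, which relies on Krein--Milman.
\item It applies \ref{ch2:WIIA}, which is implied by \ref{ch2:IIA}, to get the result when $m(S)=(1,\ldots,1)$.
\item It asserts that the same argument works when $m(S)=(x,\ldots,x)$.
\item Only at the end does it invoke full \ref{ch2:IIA} for sets whose utopian point is off the diagonal. The embedding used there is left implicit.
\end{enumerate}

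You instead embed an arbitrary $S$ directly into the simplex $T_c$ with $c=\max_{u\in S}\sum_i u_i$. This is legitimate precisely because \ref{ch2:IIA}, unlike \ref{ch2:WIIA}, does not require the utopian points to coincide. Here $m(T_c)=(c,\ldots,c)$ generally differs from $m(S)$, even when $m(S)=(1,\ldots,1)$. Your route buys several things:
\begin{itemize}
\item It removes the need for the $h(x)$-vertices and for Lemma \ref{ch2:biglemma}.
\item It uses only the elementary fact that the simplex is $\conv\{0,ce_1,\dots,ce_n\}$.
\item It makes the final embedding explicit.
\item It verifies the sufficiency direction, which the paper omits.
\end{itemize}

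The paper's route buys something different. It shows that the strengthening from \ref{ch2:WIIA} to \ref{ch2:IIA} is needed only in the last step. It also keeps a visible parallel with Theorem \ref{ch2:RU theorem}, making clear that the utilitarian characterization differs from the RU one only by dropping \ref{ch2:INV} and imposing \ref{ch2:IIA} in full.
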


\begin{proof}
Follow the first three steps of the proof of Theorem \ref{ch2:RU theorem} to find that $f(S)=\argmax_{u\in S} \sum_{i\in N} u_i$ whenever $m(S)=(1,...,1)$. Note that an analogous argument can be made when $m(S)=(x,...,x)$ for any $x>0$. Then by \ref{ch2:IIA}, $f(S)=\argmax_{u\in S} \sum_{i\in N} u_i$ even if $m(S)\neq(x,...,x)$.
\end{proof}

Note that if utilities are indeed valuations, the utilitarian sum is identical to economic surplus, a ubiquitous measure of welfare. Hence, a group that bargains rationally will also bargain welfare-optimally in the typical sense. Conversely, our axiomatization provides justification for the use of economic surplus as a welfare measure.

\section{Conclusion}\label{ch2:secconclusion}

This paper considers the implications of rational risk preferences on fair bargaining. While the vNM axioms are formulated in a setting where the primitives are preferences over alternatives, the canonical formulation of the bargaining problem by \textcite{nash1950bargaining} considers individuals' utilities directly. It is, therefore, not obvious what it means for a bargaining solution to deal with risk in a rational manner. We propose two axioms, \ref{ch2:NBR} and \ref{ch2:CS}, which capture the consequences of vNM preferences. Specifically, these axioms describe the role of randomization under the vNM theory. Together with standard axioms from the bargaining literature, these axioms lead to the relative utilitarian bargaining solution.

One might remark that in light of \textcite{harsanyi1955cardinal}, who shows that a rational group preference must be utilitarian, this result comes as no surprise. Note, however, that \ref{ch2:NBR} and \ref{ch2:CS} are necessary but not sufficient criteria for rational behavior under risk. In fact, they are quite far from being sufficient, as shown by the solutions that become permissible when \ref{ch2:WSYM} is dropped (Section \ref{ch2:sec:asym}). We, therefore, find it quite remarkable that these two rather weak axioms already pin down the relative utilitarian solution in the presence of standard bargaining axioms.

We believe that the relative utilitarian bargaining solution deserves a place in classrooms and textbooks, next to the solutions by \textcite{nash1950bargaining} and \textcite{kalai1975other}. As we have shown, the axioms that characterize the RU solution are straightforward, and the proof is short and simple. Furthermore, like the Nash and Kalai-Smorodinsky solution, the RU solution has a neat geometric interpretation. While the Kalai-Smorodinsky solution is the intersection of the Pareto frontier and the $45 \degree$ line and the Nash solution maximizes the area of the rectangle spanned by the origin and the Pareto frontier, the RU solution maximizes the circumference of the rectangle spanned by the origin and the Pareto frontier. We illustrate this with an example depicted by Figure \ref{ch2:fig:barg}. \begin{figure}[h]
\center
\includegraphics[scale=1]{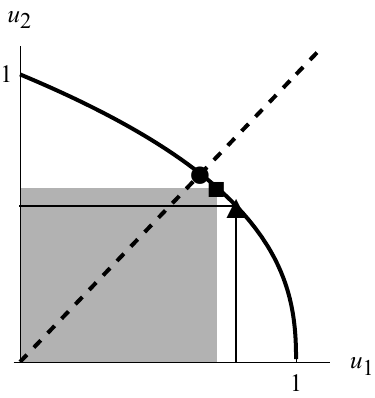}
\caption{The Nash, Kalai-Smorodinsky and RU solution.} 
\label{ch2:fig:barg}
\end{figure} The Pareto frontier is given by $u_2 = \left(1-u_1\right)^{\frac{2}{5}}$. The Kalai-Smorodinsky solution is indicated by a circle, the Nash solution by a square and the RU solution by a triangle.

We continue our discussion on the normative desirability of randomization, particularly when allocating an indivisible item between two agents. As noted in the introduction, randomization may seem appealing for reasons we are supposed to abstract from: individuals care about procedures, arbitrators might be biased, and randomization can be used to break ties. To disregard these factors, consider the following scenario:

(i) Individuals care only about the final outcomes and are completely indifferent to the procedures by which these outcomes are achieved.

(ii) The arbitrator is entirely unbiased and is unaware of the agents' identities.

(iii) To eliminate the use of randomization merely as a tie-breaking tool, assume randomization is costly. Specifically, a randomization device can be used, which destroys the item with a probability of $c$ and, with a probability of $1-c$, triggers another lottery that assigns the item to either agent with equal probability.

What cost $c$ should we be willing to incur for the sake of randomization? According to both the Nash and Kalai-Smorodinsky solutions, any cost strictly below 1 should be incurred. Therefore, both solutions are willing to sacrifice nearly the entire surplus. According to the RU solution, it is not worth sacrificing any of the surplus. Ultimately, only one of the agents can possess the item, and randomization does not alter this fact.

\begin{appendices}
\section{}\label{ch2:appA}

We begin with a definition of $S_x$ and $P_x$ for a general $n\in\mathbb{N}$. For this purpose, the following notation is introduced. For $x\in[1,n]$, let $\lfloor x \rfloor :=\max\{k\in\{1,...,n\}:k\leq x\}$, meaning $\lfloor x \rfloor$ is $x$ rounded down to the closest integer. For any $u\in\mathbb{R}^n$, let $\pi(u)\subset\mathbb{R}^n$ denote the set containing $u$ and all its permutations. For any $k\in\{0,...,n\}$, let $g(k)$ denote the sequence of length $n$ where $g_i(k)=1$ if $i\leq k$ and $g_i(k)=0$ otherwise. Hence, $g(0)=(0,...,0)$, $g(1)=(1,0,...,0)$ and so on. For any $x\in[1,n]$, let $h(x)$ denote the sequence of length $n$ where $h_i(x)=1$ if $i \leq x$, $h_i(x)=x-\lfloor x \rfloor$ if $i = \lfloor x \rfloor + 1$ and $h_i(k)=0$ otherwise. For any $x\in[1,n]$, let 
\begin{align}\nonumber
P_x &:= \pi(h(x)), & S_x &:= P_x \cup \bigcup_{k=0}^{\lfloor x \rfloor}\pi(g(k)).
\end{align} The following lemma identifies $\conv S_x$ and $\conv P_x$.

\begin{lemma}\label{ch2:biglemma}
For any $x\in[1,n]$, $\conv S_x=\{u\in[0,1]^n:\sum_{i\in N}u_i \leq x\}=:R_x$ and $\conv P_x=\{u\in[0,1]^n:\sum_{i\in N}u_i = x\}=:Q_x$.
\end{lemma}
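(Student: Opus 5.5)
Two inclusions are needed for each identity; $\conv P_x=Q_x$ is the core, and $\conv S_x=R_x$ follows from it together with the convex hull of the $0$–$1$ points.

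\emph{Easy inclusions.} Every point of $P_x$ lies in $[0,1]^n$ with coordinate sum $\lfloor x\rfloor+(x-\lfloor x\rfloor)=x$. Every point of $\pi(g(k))$ with $k\le\lfloor x\rfloor$ lies in $[0,1]^n$ with coordinate sum $k\le x$. Since $Q_x$ and $R_x$ are convex (intersections of the cube with a hyperplane, respectively a half-space), we get $\conv P_x\subseteq Q_x$ and $\conv S_x\subseteq R_x$.

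\emph{$Q_x\subseteq\conv P_x$.} I would argue via extreme points. $Q_x$ is a compact polytope, so it is the convex hull of its vertices. A vertex of $Q_x=[0,1]^n\cap\{\sum_i u_i=x\}$ must satisfy $n$ linearly independent active constraints. Only one comes from the hyperplane, so at least $n-1$ coordinates lie in $\{0,1\}$. The remaining coordinate is then forced by the sum to equal $x$ minus an integer, and must lie in $[0,1]$.
\begin{itemize}
\item If $x$ is not an integer, that coordinate equals $x-\lfloor x\rfloor$ and exactly $\lfloor x\rfloor$ coordinates equal $1$, so the vertex is a permutation of $h(x)$.
\item If $x$ is an integer, the vertex is a $0$–$1$ vector with $x$ ones, which again is a permutation of $h(x)$ (here the fractional entry is $0$, or $x=n$ gives $(1,\dots,1)$).
\end{itemize}
Hence all vertices lie in $P_x$, and $Q_x\subseteq\conv P_x$. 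The main care point is this vertex characterization, including the integer case and the convention $h_i(x)$ for $i=\lfloor x\rfloor+1>n$ when $x=n$. If one prefers to avoid polytope theory, an elementary alternative is induction or a "majorization" argument: any $u\in Q_x$ is majorized by $h(x)$, so by Rado's theorem it lies in the convex hull of the permutations of $h(x)$.

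\emph{$R_x\subseteq\conv S_x$.} Take $u\in R_x$ with $s:=\sum_i u_i\le x$.
\begin{itemize}
\item If $s\ge\lfloor x\rfloor$: either $s$ is an integer and $u\in Q_s$, or $s\in(\lfloor x\rfloor,x]$. In the latter case, scale toward the origin: $u=t v$ with $v=(x/s)u$ and $t=s/x$. If $v\in[0,1]^n$, then $v\in Q_x=\conv P_x$, and $u$ lies on a segment between $v$ and $g(0)$.
\item Scaling can push $v$ outside the cube, so instead I would use the vertex approach again. $R_x$ is the polytope $[0,1]^n\cap\{\sum_i u_i\le x\}$. Its vertices either have all $n$ active constraints from the cube, giving $0$–$1$ vectors with at most $\lfloor x\rfloor$ ones, i.e. points of $\pi(g(k))$ with $k\le\lfloor x\rfloor$; or they have the sum constraint active, giving vertices of $Q_x$, which lie in $P_x$.
\end{itemize}
Thus all vertices of $R_x$ lie in $S_x$, so $R_x=\conv(\text{vertices})\subseteq\conv S_x$.

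The main obstacle is the vertex enumeration for $R_x$ and $Q_x$. This is a short linear-independence count, but it needs care at the boundary cases where $x$ is an integer and at $x=n$.
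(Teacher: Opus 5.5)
Your proof is correct and follows the same route as the paper: show that every extreme point of $Q_x$ (resp.\ $R_x$) lies in $P_x$ (resp.\ $S_x$) and then invoke Krein--Milman/Minkowski. The only difference is that you identify vertices through the rank-$n$ active-constraint criterion, whereas the paper uses explicit $\varepsilon$-perturbations, which prove the same fact by hand; the abandoned scaling bullet in your $R_x$ part is unnecessary and can be deleted, since the vertex enumeration alone gives that inclusion.
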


\begin{proof}
An element $u\in S\subset\mathbb{R}^n$ is an \textit{extreme point} of $S$ if there doesn't exist $v,w\in S$ and $\lambda\in(0,1)$ such that $v\neq w$ and $u=\lambda v + (1-\lambda)w$. In the following, we show that $S_x$ (resp. $P_x$) contains all extreme points of $R_x$ (resp. $Q_x$). Then, by the Krein–Milman theorem, the lemma follows from the fact that $S_x\subseteq R_x$ and $P_x \subseteq Q_x$. 

First, we show that an extreme point $u$ of $R_x$ (resp. $Q_x$) can have at most one coordinate $u_j$ such that $0<u_j<1$. Consider $u\in R_x$ (resp. $Q_x$) with $0<u_j<1$ and $0<u_k<1$ for some $j\neq k$. Then there exists $v,w\in R_x$ (resp. $Q_x$) and $\varepsilon>0$ such that $v_i=w_i=u_i$ whenever $i\notin\{j,k\}$ and \begin{align}
v_j & = u_j + \varepsilon, & w_j & = u_j - \varepsilon,\\
v_k & = u_k - \varepsilon, & w_k & = u_k + \varepsilon.
\nonumber
\end{align}
Since $\frac{1}{2}v+ \frac{1}{2}w = u$, $u$ is not an extreme point of $R_x$ (resp. $Q_x$).

Second, we show that if there is an extreme point of $R_x$ with exactly one coordinate $u_j$ such that $0<u_j<1$, then $u\in\pi(h(x))$. Assume $u\notin\pi(h(x))$ and $0<u_j<1$. Then there exists $v,w\in R_x$ and $\varepsilon>0$ such that $v_i=w_i=u_i$ whenever $i\neq j$ and \begin{align}
v_j & = u_j + \varepsilon, & w_j & = u_j - \varepsilon.
\nonumber
\end{align}
Since $\frac{1}{2}v+ \frac{1}{2}w = u$, $u$ is not an extreme point of $R_x$.

By the above arguments, the only candidates for extreme points of $R_x$ (resp. $Q_x$) are the points in $S_x$ (resp. $P_x$).

\end{proof}

We now present the proof of Theorem \ref{ch2:RU theorem}. The proof is nearly identical to the sketch in Section \ref{ch2:secaxioms}. Nevertheless, we state the proof for the sake of completeness.\\
\ \\
\noindent \textit{Proof of Theorem \ref{ch2:RU theorem}} \nopagebreak[4] \\
First, consider a bargaining set $S_x$ for some $x\in[1,n]$. Note that $P_x$ is the set of Pareto optimal points in $S_x$. Hence, by \ref{ch2:PO}, $f(S_x)\subseteq P_x$. Furthermore, note that $S_x$ is symmetric. So by \ref{ch2:WSYM}, \begin{equation}\label{ch2:neq1}f(S_x)= P_x.\end{equation}

Second, consider the bargaining set $\conv S_x$. By (\ref{ch2:neq1}) and \ref{ch2:NBR}, $P_x \subseteq f(\conv S_x)$. As $\conv S_x$ is convex, $\conv P_x\subseteq f(\conv S_x)$ by \ref{ch2:CS}. From Lemma \ref{ch2:biglemma} we can see that $\conv P_x$ is the set of Pareto-optimal points in $\conv S_x$. Then by \ref{ch2:PO},
\begin{equation}\label{ch2:neq2}f(\conv S_x)=\left\{u\in[0,1]^n:\sum_{i\in N}u_i = x\right\}.\end{equation}

Third, consider any bargaining set $S\in\mathcal{S}$ where $m(S)=(1,...,1)$. Let $x^*:=\max_{u\in S}\sum_{i\in N}u_i$, which must exist due to our assumption of compactness. Note that $S\subseteq \conv S_{x^*}$ and that $S\cap f(\conv S_{x^*})$ is non-empty. Hence, (\ref{ch2:neq2}) and \ref{ch2:WIIA} imply
\begin{equation}\label{ch2:neq3}f(S)=S\cap f(\conv S_{x^*})=\argmax_{u\in S} \sum_{i\in N}u_i.\end{equation}

Fourth, note that for any $S\in\mathcal{S}$ there exists a bargaining set $S' \in\mathcal{S}$ with $m(S')=(1,...,1)$ and a positive linear transformation $\alpha$ such that $\alpha(S)=S'$. Then by (\ref{ch2:neq3}) and \ref{ch2:INV}, \begin{equation}\label{ch2:neq4}f(S)=\argmax_{u\in S}\left(\sum_{i\in N}\frac{u_i}{m_i(S)}\right).\end{equation}
This concludes the proof.

\section{}\label{ch2:app:impl}
This section contains the proof of Theorem \ref{ch2:theoremimpl}. We have already shown in the Section \ref{ch2:sec:game} that for every strictly RU-optimal alternative, there exists an SPE with this alternative as the outcome. Here, we prove the other direction, namely that every SPE outcome is strictly RU-optimal. We fix some $\theta \in \Theta$ and omit it from now on. Let $a$ denote some alternative that isn't strictly RU-optimal.

First, consider Case (1) of the initial stage, where $a$ is implemented through some unanimous proposal $(a,p)$. If $\sum_{i\in N} p_i > \sum_{i\in N} u_i(a)$ then there exists a Player $j$ for whom $p_j>u_j(a)$. This player can deviate to some proposal $(a',p')$ with $\sum_{i\in N} p'_i < \sum_{i\in N} p_i$ and be the first to reject in the approval stage, which gives an expected utility of $p_j$. If $\sum_{i\in N} p_i\leq \sum_{i\in N} u_i(a)$, then there is at least one player who strictly prefers some strictly RU-optimal alternative $a'$. Then this player can deviate to $(a',(u_1(a')-\varepsilon,...,u_n(a')-\varepsilon))$ for $\varepsilon$ sufficiently small such that $\sum_{i\in N} p_i < \sum_{i\in N} \left(u_i(a') -\varepsilon\right)$. Then $a'$ is put to the test in the approval stage and the unique SPE outcome is that all players accept and $a'$ is implemented.

Second, consider Case (2a), where there are two distinct proposals $(a',p')$ and $(a'',p'')$ with $\sum_{i\in N} p'_i = \sum_{i\in N} p''_i$, leading to $a_{\dis}$. If $n\geq 3$, then some Player $i$ can deviate in the initial stage to bring about Case (3) and choose $b_i$. So assume $n=2$. If $p'_1+p'_2 = p''_1+p''_2 > 1$, then a player would be better off proposing a lower sum, be the first to choose in the approval stage and then reject. If $p'_1+p'_2 = p''_1+p''_2 \leq 1$, then for some strictly RU-optimal alternative $a'''$ a player can propose $(a''',(u_1(a''')-\varepsilon,u_2(a''')-\varepsilon))$ for $\varepsilon$ sufficiently small, leading to $a'$.

Third, consider Case (2b), where $a$ is implemented through acceptance of the proposal $(a,p)$ by all players in the approval stage. Unless only a single Player $j$ has made the other proposal $(a',p')$, any player can bring about Case (3) and choose a more preferred alternative. Hence, assume only single Player $j$ has made the other proposal and $a$ is among the best alternatives for all players other than $j$. There must be some strictly RU-optimal alternative $a''$ that is preferred to $a$ by $j$. Since $a$ is unanimously approved, $\sum_{i\in N} p_i \leq \sum_{i\in N} u_i(a)$. Player $j$ can deviate to $(a'',(u_1(a'')-\varepsilon,...,u_n(a'')-\varepsilon))$ for $\varepsilon$ sufficiently small, leading to $a''$.

Fourth, consider Case (2b), where $a$ is chosen after some player rejects in the approval stage. Let $\Sigma:=\sum_{i\in N} u_i(b)$ for any RU-optimal alternative $b$. First, consider $n=2$. Let $(a',p')$ be Player 1's proposal and $(a'',p'')$ be Player 2's proposal. Without loss of generality, assume that $p'_1+p'_2 > p''_1 + p''_2$, such that Player 2 decides first in the approval stage. Player 2 rejects and with probability $p'_2$ chooses $a$ with $u(a)=(x,1)$ for some $x\in[0,1)$. This gives expected utility $p'_2$ to Player 2 and $p'_2 x$ to Player 1. Consider the case where $p''_1 + p''_2 \geq \Sigma$. Player 1 has an incentive to deviate to a proposal with a lower sum unless \begin{equation}\label{ch2:proofth2eq1}p''_1 \leq p'_2 x.\end{equation} Furthermore, since $p''_1 + p''_2 \geq \Sigma$, \begin{equation}\label{ch2:proofth2eq2}p''_1 + p''_2 \geq x + 1.\end{equation} Then (\ref{ch2:proofth2eq1}) and (\ref{ch2:proofth2eq2}) imply $p''_2 = 1$ and either $x=0$ or $p'_2=1$. But $x=0$ is not possible since this would imply $p''_1=0$ and this is not in the strategy-space of $g^*$. Hence $p'_2=1$ and $p''_1=x$. Since we consider the case $p''_1 + p''_2 \geq \Sigma$ and have found $p''=u(a)$, it must be that $a$ is RU-optimal. Furthermore, since $p'_2=1$, $a$ is implemented with certainty. This contradicts the assumption that the SPE outcome is sub-optimal. If $p''_1 + p''_2 < \Sigma$, then Player 1 can implement any strictly RU-optimal alternative $b$ by deviating to the proposal $(b,(u_1(b)-\varepsilon,u_2(b)-\varepsilon))$ for $\varepsilon$ sufficiently small. The only case in which Player 1 would have no incentive to do so is $p'_2=1$ and if $b$ is the best strictly RU-optimal alternative for Player 1, which contradicts the assumption that the SPE outcome is sub-optimal. This concludes the case $n=2$. The argument extends to a general $n$.

Finally, consider Case (3) of the initial stage, where $a$ is the alternative of the proposal with the highest sum strictly below $n$. Then at least one Player $j$ prefers $b_j$ to $a$. This player can deviate to a proposal with an even higher sum below $n$ and the alternative $b_j$.

This concludes the proof of Theorem \ref{ch2:theoremimpl}.

\section{}\label{ch2:app:full}

We follow the notation of \textcite{moore1988subgame}. Consider a two-player extensive game form $g$. Let $T$ denote the set of nodes. For any $t\in T$, we denote by $g(t)$ the sub-game starting at node $t$. For any $t\in T$ and $i\in\{1,2\}$, we denote by $\sigma_i(t)$ the set of actions of Player $i$. We assume that $|\sigma_i(t)|\geq 1$ for all $t\in T$. If $|\sigma_i(t)|= 1$, then Player $i$ has no decision at $t$. If both $|\sigma_1(t)|> 1$ and $|\sigma_2(t)|> 1$, then both players move simultaneously at $t$. Let $\sigma_i:=\bigtimes_{t\in T} \sigma_i(t)$ denote the strategy space of Player $i$ and let $\sigma=\sigma_1 \times \sigma_2$ denote the set of strategy profiles. For any $s\in \sigma$ and $t\in T$, we denote by $s|t$ the part of $s$ that specifies the strategy profile for the game $g(t)$. For any $s\in \sigma$ and $t\in T$, we denote by $s(t)\in \sigma_1(t)\times \sigma_2(t)$ the action pair that $s$ prescribes for the node $t$. Terminal nodes are alternatives in $A$. For any $s\in \sigma$ and $\theta\in\Theta$, $u^\theta(s)=(u_1^\theta(s),u_2^\theta(s))$ denotes the utility vector of the terminal node that results from $s$. For any $s\in \sigma$ and $t\in T$, we write $u^\theta(s|t)$ to denote the utility vector of the terminal node that is reached when starting at $t$ and playing according to $s$.\\
\ \\
\textit{Proof of Proposition \ref{ch2:prop:full}} \nopagebreak[4] \\
Fix some $\theta \in \Theta$ such that $f^{\text{RU}}(S_\theta)$ contains both $(1,0)$ and $(0,1)$. In the following, we omit $\theta$ on the individual utility functions and the game. We have already established in the main section that there must be $s^+,s^-\in \sigma$ such that both are SPE of $g$ with $u(s^+)=(1,0)$ and $u(s^-)=(0,1)$ and that $(s^+_1,s^-_2)=:s^0$ is a NE with $u(s^0)=(0,0)$. In the following, we construct $s^*\in \sigma$ such that $s^*$ is an SPE with $u(s^*)=(0,0)$.

Let $t_0,t_0,...,t_k,t_{k+1}$ denote the nodes of the equilibrium path of $s^0$, where $t_0$ is the initial node of $g$ and $t_{k+1}$ is a terminal node of $g$ associated with the pay-off $(0,0)$. Consider the second to last node $t_k$. Let $(x_k,y_k)\in \sigma_1(t_k) \times \sigma_2(t_k)$ denote the action pair that leads to $t_{k+1}$, formally $t_{k+1} = (t_k,(x_k,y_k))$. If there are only terminal nodes succeeding $t_k$, then $g(t_k)$ is a one-stage game and $s^0|t_k$ is not only a NE of $g(t_k)$ but also an SPE. Hence, $s^*|t_k = s^0|t_k$ ensures that $s^*|t_k$ is an SPE of $g(t_k)$ with outcome $(0,0)$. If there are non-terminal nodes succeeding $t_k$, then we construct $s^*|t_k$ as follows.

Choose $s^*(t_k)=(x_k,y_k)$. For any $(x_k,y)$ with $y\in \sigma_2(t_k)$ choose $s^*|(t_k,(x_k,y)) = s^+|(t_k,(x_k,y))$. This ensures that $s^*|(t_k,(x_k,y))$ is an SPE of $g(t_k,(x_k,y))$ for all $y\in \sigma_2(t_k)$. Furthermore, it must be that $u_2(s^*|(t_k,(x_k,y)))=0$ for all $y\in \sigma_2(t_k)$, because $(t_k,(x_k,y))$ can be reached by a unilateral deviation of Player 2 in the strategy profile $s^+$. Similarly, choose $s^*|(t_k,(x,y_k)) = s^-|(t_k,(x,y_k))$ for all $x\in \sigma_1(t_k)$. For $s^*|(t_k,(x,y))$ such that neither $x=x_k$ nor $y=y_k$, choose an arbitrary SPE. Note that $s^*|t_k$ has been constructed such that an SPE is played at all nodes succeeding $t_k$, such that the outcome is $(0,0)$ and such that no Player has an incentive to deviate at $t_k$. Therefore, $s^*|t_k$ is an SPE of $g(t_k)$ with outcome $(0,0)$.

Finally, consider $t_{l}$ for any $l\in\{0,...,k-1\}$. Let $(x_l,y_l)\in \sigma_1(t_l) \times \sigma_2(t_l)$ denote the decision that leads to $t_{l+1}$, formally $t_{l+1} = (t_l,(x_l,y_l))$. Assume $s^*|t_{l+1}$ is an SPE of $g(t_{l+1})$ with outcome $(0,0)$. Choose $s^*(t_l)=(x_l,y_l)$ and construct $s^*|(t_l,(x,y))$ for $(x,y)\neq (x_l,y_l)$ just as before. Then $s^*|t_l$ is an SPE of $g(t_{l})$ with outcome $(0,0)$. By induction, $s^*$ is an SPE of $g$ with outcome $(0,0)$. This concludes the proof.

\section{}\label{ch2:appD}

We prove the Theorem \ref{ch2:theo:asym} for $n=2$. The proof for general $n$ follows similarly as that of Theorem \ref{ch2:RU theorem}. First, consider the bargaining set $S=\{(0,0),(1,0),(0,1)\}$. By \ref{ch2:PO}, there are three possible cases.

\textit{Case 1:} $f(S)=\{(1,0)\}$

\textit{Case 2:} $f(S)=\{(0,1)\}$

\textit{Case 3:} $f(S)=\{(1,0),(0,1)\}$

\noindent Assume Case 1 holds true. By \ref{ch2:NBR}, $(1,0)\in f(\conv S)$ and by \ref{ch2:WIIA}, $(0,1)\notin f(\conv S)$. By \ref{ch2:C}, there exists a $\lambda\in[0,1]$ such that $f(\conv S \cup \{(\lambda,1)\}) = f(\conv S) \cup \{(\lambda,1)\}$. Note that $\lambda>0$, as it would otherwise contradict the assumption that $(0,1)\notin f(\conv S)$. Furthermore, note that $\lambda<1$ as it would otherwise contradict \ref{ch2:PO}. By \ref{ch2:NBR}, $\{(1,0),(\lambda,1)\}\subseteq f(\conv (S \cup \{(\lambda,1)\}))$ and by \ref{ch2:CS}, \begin{equation}\label{ch2:asym1}f(\conv (S \cup \{(\lambda,1) \}))=\left\{u\in[0,1]^2: u_1+ (1-\lambda) u_2 = 1\right\}.\end{equation} 

Second, consider the bargaining set $$S_x:=\conv \left(S \cup \{(\lambda,1)\}\right)\cup\{(1,x),(x+(1-x)\lambda,1)\}$$ for any $x\in [0,1]$. By \ref{ch2:I} and (\ref{ch2:asym1}), $\{(1,x),(x+(1-x)\lambda,1)\}\subseteq f(S_x)$. By \ref{ch2:NBR}, $\{(1,x),(x+(1-x)\lambda,1)\}\subseteq f(\conv S_x)$ and by \ref{ch2:CS} and \ref{ch2:PO},
\begin{equation}\label{ch2:asym2}
f(\conv S_x)=\left\{u\in[0,1]^2: u_1+ (1-\lambda) u_2 = 1+(1-\lambda)x\right\}.
\end{equation}

Third, consider any bargaining set $S\in\mathcal{S}$ where $m(S)=(1,1)$. Let $x^*:=(1-\lambda)^{-1}\left(\max_{u\in S}(u_1+(1-\lambda)u_2)-1\right)$, which must exist due to our assumption of compactness. Note that $S\subseteq \conv S_{x^*}$ and that $S\cap f(\conv S_{x^*})$ is non-empty. Hence, (\ref{ch2:asym2}) and \ref{ch2:WIIA} imply
\begin{equation}\label{ch2:asym3}f(S)=S\cap f(\conv S_{x^*})=\argmax_{u\in S} (u_1 + (1-\lambda)u_2).\end{equation}

Fourth, note that for any $S\in\mathcal{S}$ there exists a bargaining set $S' \in\mathcal{S}$ with $m(S')=(1,1)$ and a positive linear transformation $\alpha$ such that $\alpha(S)=S'$. Then by (\ref{ch2:asym3}) and \ref{ch2:INV}, \begin{equation}f(S)=\argmax_{u\in S}\left(\frac{u_1}{m_1(S)}+(1-\lambda)\frac{u_2}{m_2(S)}\right).\end{equation}
This concludes Case 1. 

Note that by an analogous argument, we can find an analogous solution for Cases 2 and 3. We can summarize the results for the different cases as follows. There exists $(\mu_1,\mu_2)\in(0,1)^2$ where $\mu_1+\mu_2=1$ such that for any $S\in\mathcal{S}$,
\begin{equation}f(S)=\argmax_{u\in S}\left(\mu_1\frac{u_1}{m_1(S)}+\mu_2\frac{u_2}{m_2(S)}\right).\end{equation}
This concludes the proof.

\end{appendices}

\printbibliography[segment=\therefsegment, heading=subbibintoc]

\end{document}